\def\BibTeX{{\rm B\kern-.05em{\sc i\kern-.025em b}\kern-.08em
    T\kern-.1667em\lower.7ex\hbox{E}\kern-.125emX}}
\newcommand*{\rom}[1]{\expandafter\@slowromancap\romannumeral #1@}
\newcommand{\riki}{\textsc{Raks}\xspace}
\newcommand{\wiki}{\textsc{WikiSearch}\xspace}
\begin{document}
\title{Efficient Radial Pattern Keyword Search on Knowledge Graphs in Parallel}
  
\author{Yueji Yang}
\orcid{0000-0001-7413-1925}
\affiliation{%
	\institution{National University of Singapore}
}
\email{yueji@comp.nus.edu.sg}

\author{Anthony K. H. Tung}
\affiliation{%
	\institution{National University of Singapore}
}
\email{atung@comp.nus.edu.sg}

\renewcommand{\shortauthors}{}

\begin{abstract}
Recently, keyword search on Knowledge Graphs (KGs) becomes popular. Typical keyword search approaches aim at finding a concise subgraph from a KG, which can reflect a close relationship among all input keywords. The connection paths between keywords are selected in a way that leads to a result subgraph with a better semantic score. However, such a result may not meet users' information needs because it relies on the scoring function to decide what keywords to link closer. Therefore, such a result may miss close connections among some keywords on which users intend to focus.  
In this paper, we propose a parallel keyword search engine, called \riki. 
It allows users to specify a query as two sets of keywords, namely \textit{central keywords} and \textit{marginal keywords}. Specifically, central keywords are those keywords on which users focus more. Their relationships are desired in the results. Marginal keywords are those less focused keywords. On one hand, their connections to the central keywords are desired. On the other hand, they provide additional information that helps discover better results in terms of user intents.
To improve the efficiency, we propose novel weighting and scoring schemes that boost the parallel execution during search while retrieving semantically relevant results. 
We conduct extensive experiments to validate that \riki can work efficiently and effectively on open KGs with large size and variety. 
\end{abstract}

\maketitle

\section{Introduction}\label{sec:introduction}
\begin{figure*}
	\centering
	\subfloat[Tree Result By Typical Approaches]{\includegraphics[width=3.35in]{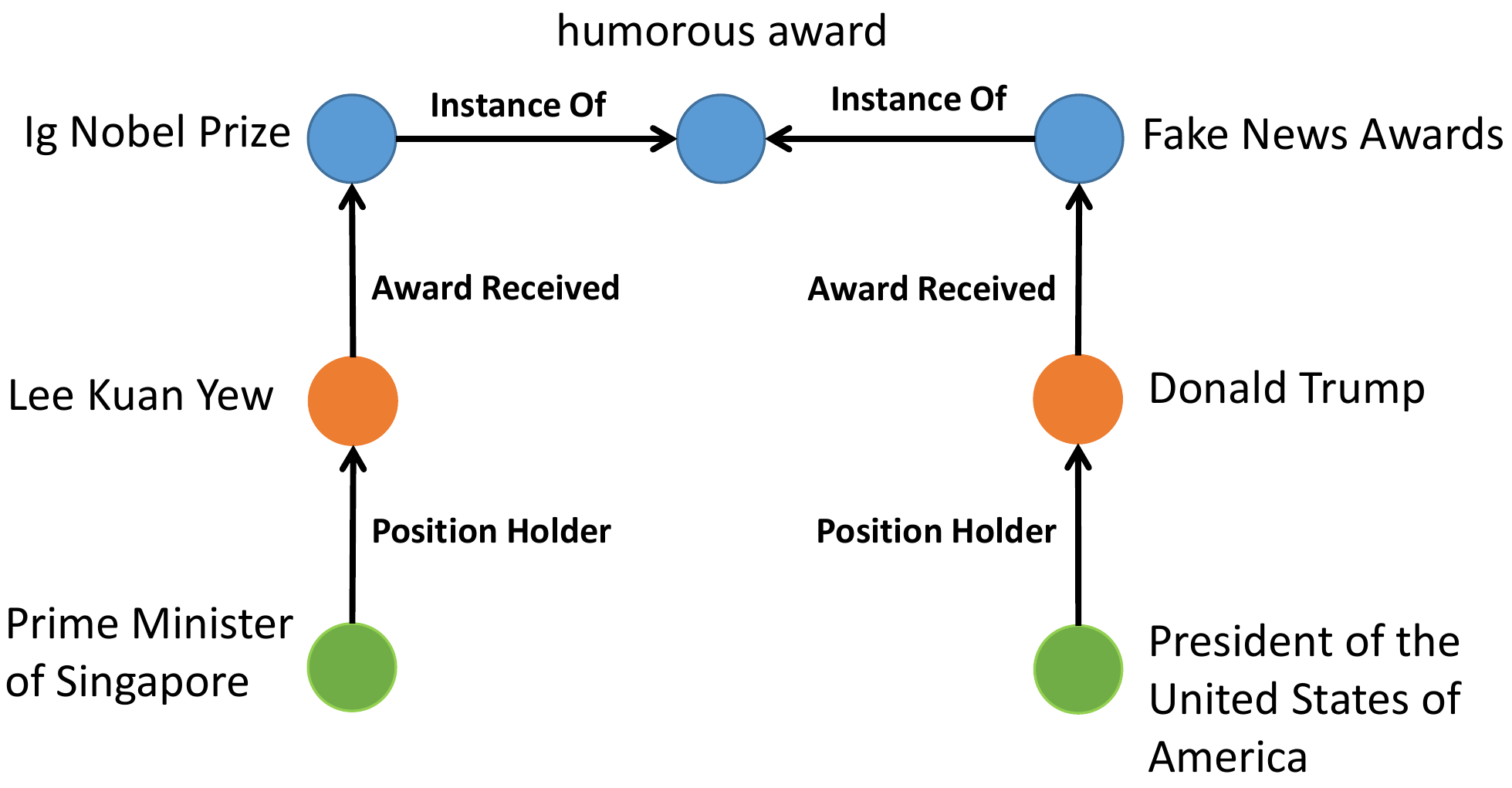}\label{subfig:introCB0} }
	\subfloat[Subgraph Result By \riki]{\includegraphics[width=3.35in]{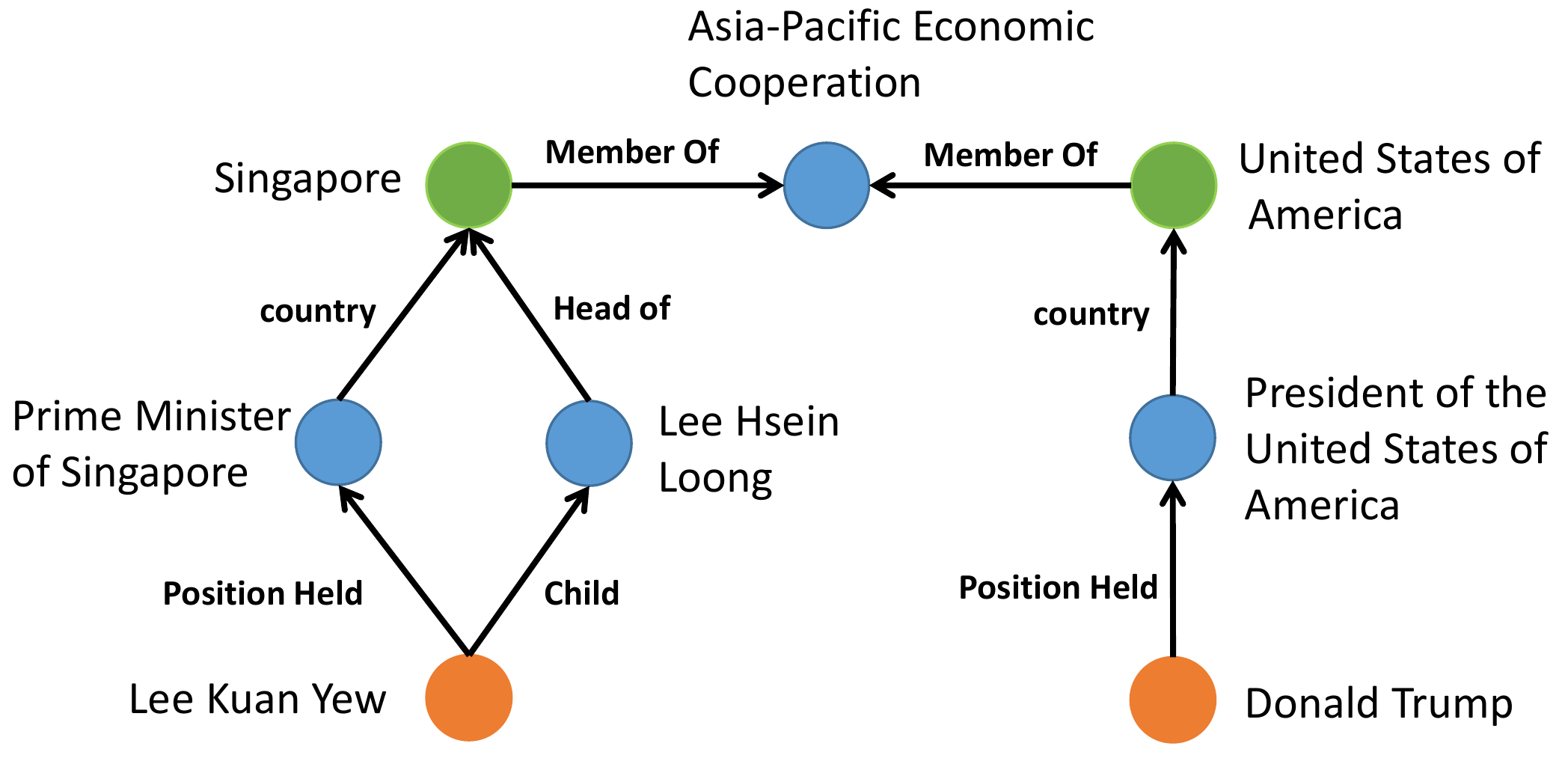}\label{subfig:introCB1}}
	
	\caption{Keyword Search results by typical keyword search approaches and \riki. The input keywords include \{\textit{Trump, Lee Kuan Yew, Singapore, USA}\}. Figure~\ref{subfig:introCB0} shows no intuitive connection between \textit{USA} and \textit{Singapore} while Figure~\ref{subfig:introCB1} does.}
	\label{fig:intro}
\end{figure*}
Open Knowledge Graph (KG) has become popular, such as Wikidata \cite{Erxleben:Wikidata} and Freebase \cite{Bollacker:freebase}. They contain various entities along with their relationships from the real world. Keyword search \cite{Wang:survey} is widely used in querying such heterogeneous data sources due to its simplicity. Users do not need to learn query languages or underlying data schema to issue a query, as they do when using structured query languages, like SQL \cite{Groff:SQL} and SPARQL \cite{sparql}. 

Typical keyword search approaches on graphs \cite{Wang:survey,Aditya:banks,He:blinks,Kacholia:bidirection} generally work as follows. Firstly, they take as input a set of keywords and find nodes containing those keywords in a KG. Then, according to some semantic ranking functions, the search engine enumerates paths from initial nodes and identifies a connected subgraph (or a tree) that relates all input keywords. Finally, top-k result subgraphs are generated and presented to users. During this process, the linking paths between keywords are enumerated and preserved based on the ranking functions, which guide the search algorithm. As a consequence, the preserved linking paths in a result may not meet users' information needs. That is, some keywords on which users focus more turn out distant in the result. Whereas, those less focused keywords which are meant to provide additional information for search may constitute the central or major portion of the result subgraph. To see this, consider the following scenario.

Suppose a user wants to know the relationships between \textit{Singapore} and \textit{USA}, and how \textit{Donald Trump} and \textit{Lee Kuan Yew} are related to or connected via the two countries.
With typical keyword search engines, the user can only submit all keywords as a whole, \{\textit{Trump, Lee Kuan Yew, Singapore, USA}\}. A top result is shown in Figure~\ref{subfig:introCB0}. From this result, there is no intuitive connection between \textit{Singapore} and \textit{USA} (in green) because they are distant in the result. Whereas, the other two less focused keywords, \{\textit{Trump, Lee Kuan Yew}\} (in orange), are connected more closely. Such a result obviously does not meet the user's information needs. This motivates us to design a way that allows users to distinguish the focused and less focused keywords for retrieving results. 
Also, the simplicity of keyword search should not be hurt.

To address the above problem, in this work, we propose to allow users to input two sets of keywords, namely \textit{central keywords} (w.r.t. focused ones) and \textit{marginal keywords} (w.r.t. less focused ones). Together, we call such a query as \textit{Radial Pattern Query} (RPQ).
Furthermore, the results w.r.t. an RPQ must capture two types of connections: one is the connections among \textit{central keywords}, the other is the connections between \textit{central} and \textit{marginal keywords}. Result subgraphs with such a connection structure tend to be in a \textit{radial pattern} shape, with \textit{central keywords} in the center and \textit{marginal keywords} surrounding the center. In particular, such result subgraphs are called \textit{Radial Pattern Graph}s (RPGs). In Figure~\ref{subfig:introCB1}, we show the corresponding result by \riki. \textit{Singapore} and \textit{USA} (in green) are submitted as \textit{central keywords} while \textit{Trump} and \textit{Lee Kuan Yew} (in orange) are regarded as \textit{marginal keywords}. From this result, users can easily learn that \textit{Singapore} and \textit{USA} are both members of \textit{Asia-Pacific Economic Cooperation}. Besides, the correspondence between \textit{Lee Kuan Yew} and \textit{Singapore} is made clear, so is that between \textit{Trump} and \textit{USA}. Note that, in Figure~\ref{fig:intro}, we use well-known people and countries for illustrative purpose. In general, we cannot assume users know the correspondences between \textit{central} and \textit{marginal keywords}, which are supposed to be discovered during search procedure.

In addition to submitting a query \textbf{manually} as in the above application scenario, \riki can also be used for \textbf{automatically} mapping documents to KGs with \textit{topic} structures preserved. 
Specifically, documents (e.g. scientific articles or books) commonly have some main topics (i.e. focused) and some relevant ones (i.e. less focused), which can be easily identified using TF-IDF scores or topic modeling like LDA \cite{Blei:LDA}. We can submit to \riki the main topics as \textit{central keywords} and the relevant ones as \textit{marginal keywords}. The result subgraphs can better capture the document topic structures. Such result subgraphs can be used to aid many downstream tasks like document clustering \cite{Wang:KnowSim}. Interestingly, \riki is rather useful in mapping news stories to a KG. A news story often has some main entities, such as \textit{Singapore} and \textit{USA} in the aforementioned example. There are also some less important but relevant entities such as \textit{Trump} and \textit{Lee Kuan Yew}. Given such a story, Figure~\ref{subfig:introCB1} serves as a good explanation for the relationships among those entities. Such relationships can not only help readers in understanding the news background, but also reveal valuable clues (e.g. \textit{Asia-Pacific Economic Cooperation} and \textit{Lee Hsein Loong} in Figure~\ref{subfig:introCB1}) for journalists to discover more newsworthy entities along with their stories \cite{Hossain:StoryTelling}.  

However, producing fast response to an RPQ is not easy, in the face of today's KGs with large volume and variety. An RPQ reduces to a typical keyword search query if there are no \textit{marginal keywords}. This indicates that solving an RPQ is at least as difficult as solving a typical keyword search query. The complexity of typical keyword search stems from Group Steiner Tree (GST) Problem \cite{Ding:min-cost-topk,Li:groupsteiner} which is NP-Hard. GST problem cannot even be approximated by a constant ratio within polynomial time \cite{Ihler}. Established works, like BANKS-\rom{1} \cite{Aditya:banks}, BANKS-\rom{2} \cite{Kacholia:bidirection} and BLINKS \cite{He:blinks}, all seek to solve the approximate version of GST problem \cite{Li:groupsteiner}. They define a scoring function which aggregates scores of paths one from each keyword. Then, they have to do many path enumeration and joining operations to discover the results. Moreover, discovering connections from \textit{marginal keywords} to \textit{central keywords} is also difficult in two aspects. First, the process inherently involves massive path enumeration and ranking as in typical keyword search. This can cause efficiency issue when the graph size is large. Second, it is not clear how to make central and marginal keywords closely connected and relevant. For example, marginal keywords may form a connected subgraph before reaching any central keywords. Such a subgraph can be irrelevant to relationships among central keywords.

We summarize our \textbf{approach} and \textbf{contributions} as follows. First, we formalize Radial Pattern Keyword Search problem, which takes as input Radial Pattern Query (RPQ) and outputs \textit{Radial Pattern Graph}s (RPGs). In addition, we also define the intermediate results as \textit{Central Graphs} (CGs), which are generated during search.
Second, we design weighting and scoring schemes, which boost the parallel execution. The goal is to avoid expensive comparison of massive path scores during search. As a result, the search can finish in time while the results are semantically relevant.
Third, we propose a unified algorithmic framework for finding both the intermediate results (CGs) and the final results (RPGs). 
Under this framework, an efficient two-phase parallel algorithm is implemented. The algorithm runs \textit{twice} with a few modifications during search for finding intermediate and final results. In the first run, it takes \textit{central keywords} as input and outputs the intermediate result subgraphs, i.e. CGs. In the second run, the input consists of \textit{marginal keywords} and the CGs obtained from first run. The final output is the top-k RPGs. 
Furthermore, this two-phase algorithm divides into \textit{exploration phase} and \textit{recovering phase}. In \textit{exploration phase}, the algorithm first explores the data graph from nodes containing keywords (\textit{central} or \textit{marginal}). Unlike typical graph traversal algorithms, this procedure is quite efficient because it need not compare and track any traversal paths. In \textit{recovering phase}, the algorithm recovers the result subgraph structures (nodes and edges) based on traversal information from the \textit{exploration phase}. More importantly, it can avoid many writing and reading conflicts in parallel. 
Furthermore, our algorithm can efficiently work on either multi-core CPUs or a single GPU. 
Lastly, we conduct extensive experiments to validate the efficiency and effectiveness of the implemented search engine, \riki. 

\vspace{1mm}
\noindent\textbf{Organization.} First, Section \ref{sec:problemDef} shows the problem definitions. Second, Section \ref{sec:weightingScoring} introduces the weighting and scoring schemes, which are essential for parallel execution. Third, Section \ref{sec:implementation} introduces details of the unified algorithmic framework and the implementation. Fourth, The experiment results are reported in Section \ref{sec:experiment}. Lastly, we discuss related work in Section \ref{sec:relatedWork} and make the conclusion in Section \ref{sec:conclusion}. 


%
\section{Problem Definition}
\label{sec:problemDef}
In this section, we introduce the preliminaries and definitions. 
We consider keyword search on a bi-directed labeled knowledge graph $\mathcal{G(V,E)}$ with entity node set $\mathcal{V}$ and relationship edge set $\mathcal{E}$. All nodes have short textual descriptions and all edges have labels. Edge labels are used to set edge weights. We added one reverse edge for every original edge in the KG since the search via reversed directions is also desirable.
A node containing input keywords is called a \textit{keyword node}.
Radial Pattern Query (RPQ) is formally presented in Definition \ref{def:RPQ}.

\begin{definition}
\textit{(Radial Pattern Query (RPQ)).} A Radial Pattern Query $\mathcal{Q=\{C,M\}}$ consists of two sets of keywords: \textit{central keywords} $\mathcal{C}=\{c_1, c_2,...,c_p\}$ and \textit{marginal keywords} $\mathcal{M}=\{m_1,m_2,...,m_q\}$, where $\mathcal{C}\neq \emptyset$.
\label{def:RPQ}
\end{definition}
When $\mathcal{M}=\emptyset$, RPQ becomes a typical keyword search query. We use $t$ to denote a keyword, without distinguishing \textit{central} or \textit{marginal}. We call a node containing central (resp. marginal) keywords as a central (resp. marginal) keyword node. 
In the following definitions, the distance from node $u$ to node $v$, denoted by $\mathcal{D}(u,v)$, is the smallest path score from $u$ to $v$. 
In this paper, the paths with smallest scores are considered shortest.
We use $\mathcal{V}_{t}$ to denote the set of all nodes containing keyword $t \in \mathcal{C}\cup\mathcal{M}$. The distance metrics between keywords and nodes are in Definition \ref{def:distKeyword2Node} and \ref{def:distKeyword2NodeSet}. By considering the smallest distances, we can find the most relevant connections.

\begin{definition}
	\textit{(Distance from a keyword to a node).} The distance from a keyword $t$ to a node $v$ is defined as
	$\mathcal{D}(t, v) = \min\limits_{u\in \mathcal{V}_t}{\mathcal{D}(u,v)}	$.
	\label{def:distKeyword2Node}
\end{definition}
\begin{definition}
	\textit{(Distance from a keyword to a node set).} The distance from a keyword $t$ to a node set $\mathcal{V}$ is defined as $\mathcal{D}(t,\mathcal{V})=\min\limits_{v\in \mathcal{V}}{\mathcal{D}(t,v)}$.
	\label{def:distKeyword2NodeSet}
\end{definition}
Before formally presenting Radial Pattern Graph (RPG), we first define the intermediate result called Central Graph (CG). It is generated by considering only \textit{central keywords}. We use $SP(u,v)$ to denote the set of all shortest path instances which produce $\mathcal{D}(u,v)$ where $u$ and $v$ are nodes. Likewise, given a keyword $t$ and a node $v$, $SP(t,v)=\bigcup SP(u^*,v)$, where $u^*\in\mathcal{V}_t$ \textit{and} $\mathcal{D}(u^*,v)=\mathcal{D}(t,v)$. Furthermore, for a keyword $t$ and a node set $\mathcal{V}$, $SP(t, \mathcal{V})=\bigcup SP(t,v^*)$, where $v^* \in \mathcal{V}$ \textit{and} $\mathcal{D}(t,v^*)=\mathcal{D}(t,\mathcal{V})$. In line with the definitions of distance metrics, these shortest paths correspond to more closely related relationships.

\begin{definition}
\textit{(Central Graph (CG)).} Given \textit{central keywords} $\mathcal{C}$ and a node $\tilde{v}\in \mathcal{V}$, a Central Graph is defined as $\mathcal{G}^c(\tilde{v},\mathcal{V}^c, \mathcal{E}^c)=\bigcup_{c_i \in \mathcal{C}} {SP(c_i, \tilde{v})}$, where $\mathcal{V}^c$ and $\mathcal{E}^c$ denote the node and edge sets in $\mathcal{G}^c$. In particular, $\tilde{v}$ is called Central Node.
\label{def:CG}
\end{definition}
In Definition \ref{def:CG}, a CG is a subgraph located at a Central Node $\tilde{v}$, connecting all \textit{central keywords}.
Different positions of $\tilde{v}$ can result in different scores even for the same-structured CGs (ignoring edge directions). We defer the discussion of scoring till Section \ref{sec:graphScoring}. 
We summarize three properties of CGs as follows.
\textbf{(1) Totality}: A CG covers all \textit{central keywords}.
\textbf{(2) Connectivity}: A CG is a connected subgraph.
\textbf{(3) Multi-path Optimality}: For every keyword, a CG keeps exactly all shortest paths from that keyword to the central node, i.e. $SP(c_i,\tilde{v})$.
CGs are meant to capture a close relationship among all \textit{central keywords}, on which users focus more. There can be many of such CGs that closely relate all \textit{central keywords}, however, only some of them are relevant when taking into consideration \textit{marginal keywords}. Although less focused, \textit{marginal keywords} are also important in two aspects. First, their connections with central keywords are desired. Second, they provide additional information to help discover results that better meet user's information needs. 

Next, we present the definition of Radial Pattern Graph (RPG), which is based on an obtained CG along with the input \textit{marginal keywords}. Let $\mathcal{V}_\mathcal{C} \subseteq \mathcal{V}^c$ denote the set of all nodes containing at least one \textit{central keyword} in a CG $\mathcal{G}^c(\tilde{v},\mathcal{V}^c,\mathcal{E}^c)$. 

\begin{definition}
\textit{(Radial Pattern Graph (RPG)).} Given a query $\mathcal{Q}=\{\mathcal{C},\mathcal{M}\}$ $(\mathcal{M}\neq\emptyset)$ and a CG $\mathcal{G}^c(\tilde{v},\mathcal{V}^c,\mathcal{E}^c)$ w.r.t. $\mathcal{C}$, a \textit{Radial Pattern Graph} is defined as $\mathcal{G}^r(\tilde{v}, \mathcal{V}^r,\mathcal{E}^r)=\mathcal{G}^c\cup \mathcal{G}^m$, where $\mathcal{G}^m = \bigcup_{m_i \in \mathcal{M}} {SP(m_i,\mathcal{V}_\mathcal{C})}$, s.t. the \textit{Pass Through Constraint} (\textbf{PTC}) is satisfied. 

\noindent\textbf{PTC}: There exists at least two different \textit{marginal keyword nodes} in an RPG such that all their simple path connections (regardless of edge directions) must pass through nodes in $\mathcal{V}_\mathcal{C}$. 
For a trivial case where $|\mathcal{M}|=1$, we only require the \textit{marginal keyword} connected to any $v\in\mathcal{V}_\mathcal{C}$.
%

\label{def:RPG}
\end{definition}

\begin{figure}
	\centering
	\includegraphics[width=3in]{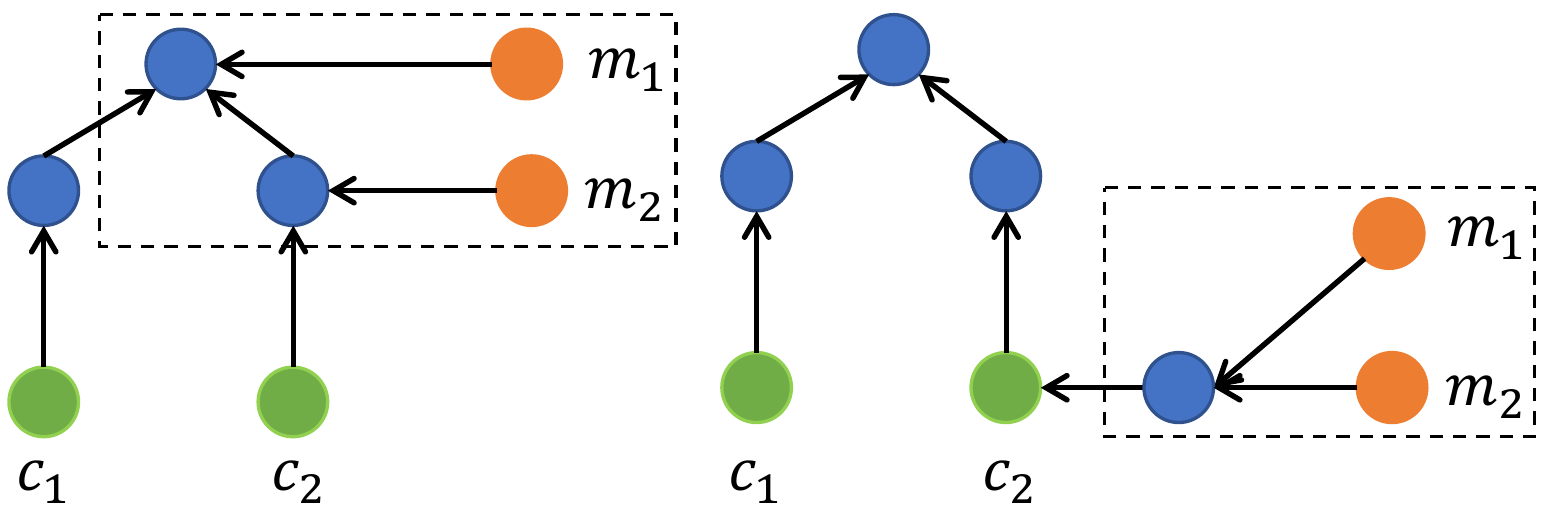}
	\caption{Examples violating PTC. The input query $\mathcal{Q}$ consists of $\mathcal{C}=\{c_1,c_2\}$ and $\mathcal{M}=\{m_1,m_2\}$.}
	\label{fig:exampleVioPTC}
\end{figure}

When context is clear, we simply use $\mathcal{G}^c$ (resp. $\mathcal{G}^r$) to denote a CG (resp. RPG).
RPG shares similar properties with CG. As for \textbf{multi-path optimality}, by including only $SP(m_i,\mathcal{V}_\mathcal{C})$ (Definition \ref{def:RPG}), we ensure a close connection from \textit{marginal keywords} to \textit{central keyword nodes} of $\mathcal{G}^c$. 
This can not only help find effective short connections, but also preserve the correspondences between \textit{central} and \textit{marginal} keywords, e.g. from \textit{Trump} to \textit{USA} and from \textit{Lee Kuan Yew} to \textit{Singapore} in Figure~\ref{subfig:introCB1}. 

In addition to path scoring, PTC helps prevent a situation where marginal keywords form a closely connected subgraph before they reach central keyword nodes. Such connections among marginal keywords are irrelevant to central keywords which users focus on. 
Figure~\ref{fig:exampleVioPTC} shows two examples violating PTC. In both examples, \textit{marginal keyword} nodes form a connected subgraph that is irrelevant to all \textit{central keyword} nodes.
Therefore, we use PTC as a structural constraint to force a close relationship between \textit{central} and \textit{marginal} keywords.
For instance, Figure~\ref{subfig:introCB1} suggests how \textit{Trump} and \textit{Lee} are connected via the relationships of the two countries, i.e. \textit{USA} and \textit{Singapore}. 
\begin{definition}
\textit{(Radial Pattern Keyword Search Problem).} Given RPQ with $\mathcal{Q}=\{\mathcal{C,M}\}$, to return top-$k$ RPGs according to some ranking scheme.
\label{def:RPKSP}
\end{definition}

By searching RPGs, users can gain insights into \textbf{(1)} relationships among \textit{central keywords}, \textbf{(2)} how they are related to \textit{marginal keywords}, and \textbf{(3)} how \textit{marginal keywords} are connected via \textit{central keywords}. RPQ allows users to specify focused and less focused keywords. Hence, the corresponding results can better meet users' information needs.
Meanwhile, the input query format remains fairly simple. In Section \ref{sec:graphScoring}, we discuss the ranking schemes mentioned in Definition \ref{def:RPKSP}.

\section{Weighting and Scoring Schemes}
\label{sec:weightingScoring}
In this section, we introduce a weighting scheme based on a graph coarsening strategy. In addition, we also design the scoring schemes of paths, CGs and RPGs. The proposed search algorithm (Section \ref{sec:implementation}) is built on these weighting and scoring schemes. In addition to improving efficiency, these schemes also help to keep the results semantically relevant and meaningful.

\subsection{Graph Coarsening}
\label{sec:edgeActivation}
Traditional keyword search methods can be seen as fine-grained. Their traversal paths are generated in a sequential manner based on carefully designed (fine-grained) edge weights. As a result, scalability issues arise due to the need to \textit{enumerate} and \textit{compare} massive path scores for calculating expansion priorities. There is little chance to parallelize the process.
However, our observation is that determining exactly the paths with better fine-grained scores may not be necessary. It wastes much time to compare paths with similar informativeness for semantic search. For example, in Figure \ref{subfig:introCB1}, the two paths from \textit{Lee Kuan Yew} to \textit{Singapore} may have slightly different fine-grained scores. Whereas, they are both relevant and informative paths. Therefore, it can save much time if we can explore massive paths with slightly different scores \textit{in parallel} without comparing them.
Inspired by this, we propose a graph coarsening and weighting scheme so that those similar paths are given the same score. 
 

Coarsening the graph includes two steps. First, we calculate the fine-grained edge weights following the traditional ways. Then, we do scaling and rounding to discretize these weights into integers (\textit{edge activation level}) using the average pairwise distance (hops) of the data graph. During the search, we use the integer weights to control search priority.

The fine-grained edge weights are computed in a way that is similar to \cite{Aditya:banks,He:blinks,Kacholia:bidirection}. The intuition is that massive edges with the same label can lead to meaningless and trivial connections. For example, two students are connected by the same department. And two people are connected because both of them are linked to the \textit{human} node by \textit{instance-of} edge. 
We use $w_{ij}$ to denote the weight of a directed edge $e_{ij}$ from node $v_i$ to $v_j$. Also, we use $l_{e_{ij}}$ to denote the label of $e_{ij}$. The weight is given by
$w_{ij}=\log (|\{e_{ix}|l_{e_{ix}}=l_{e_{ij}} \wedge v_x \in \mathcal{N}_o(v_i)\}|+|\{e_{xj}|l_{e_{xj}}=l_{e_{ij}}\wedge v_x \in \mathcal{N}_i(v_j)\}|)$, where $\mathcal{N}_o(\cdot)$ and $\mathcal{N}_i(\cdot)$ denote the sets of out-neighbors and in-neighbors. We use log to scale the number of edges \cite{Aditya:banks,Kacholia:bidirection}, since some nodes may have thousands or even millions of edges with the same label.
We then rescale the edge weights into [0,1] for ease of processing by $w_{ij} = \frac{w_{ij}-w_{min}}{w_{max}-w_{min}}$.

The next step is to coarsen the edge weights. We adapt the \textit{reward-and-penalty} strategy introduced in \cite{Yang:wikisearch} with two major modifications. First, we use edge weighting instead of node weighting by \cite{Yang:wikisearch}. Second, we develop bounds on the orginal fine-grained weights, which is missing in \cite{Yang:wikisearch}.

The \textit{reward-and-penalty} strategy scales and discretizes the edge weights into consecutive integers around the average shortest hops of the data graphs. The reason for using average pair-wise hops is to get a reasonable range for the coarsened integer weights to guide the search. This will be made clearer shortly when we present the definition of the path scoring in Section \ref{sec:pathScoring}. In the scaling process, we design a parameter $\alpha \in (0,1)$ to control the extent of scaling. Let $\overline{A}$ denote the average shortest hops in the graph. We refer to $a_{ij}$ as \textit{edge activation level}.


\begin{equation}
\label{equ:reward}
Reward(e_{ij})=\overline{A} \times \frac{(\alpha-w_{ij})}{\alpha}, if\ w_{ij} \leq \alpha
\end{equation}
\begin{equation}
\label{equ:penalty}
Penalty(e_{ij})=\overline{A} \times \frac{(w_{ij}-\alpha)}{1-\alpha}, if\ w_{ij} > \alpha
\end{equation}
\begin{equation}
\label{equ:MELevel}
a_{ij}=
\begin{cases}
Rounding(\overline{A}-Reward(e_{ij})),  & w_{ij}\leq\alpha\\
Rounding(\overline{A}+Penalty(e_{ij})), & w_{ij}>\alpha
\end{cases}
\end{equation}

Equation \ref{equ:reward} and \ref{equ:penalty} calculate the reward and penalty based on whether $w_{ij}$ is smaller or larger than $\alpha$. $a_{ij}$ is calculated by either subtracting a reward from or adding a penalty to the average distance $\overline{A}$. $\alpha$ controls the degree to which we should reward or penalize an edge. 
Note that other coarsening strategies may also be possible, such as simple linear mapping. Users can dynamically adjust the search granularity using $\alpha$. 



\subsection{Path Scoring}
\label{sec:pathScoring}

In order to avoid comparing massive path scores, we define a different path scoring, which does not add up the edge weights along the path. 

\begin{definition}
\textit{(Path Scoring)}. Given an $l$-length path sequence $p=\{v_1, e_{12} ,v_2,...,v_{l-1},e_{l-1l},v_{l}\}$, its score is recursively defined as follows.
\begin{enumerate}
	\item if $l=1$, $\mathcal{F}(p)=0$,
	\item if $l>1$, $\mathcal{F}(p)=\max\{\mathcal{F}(p\backslash\{e_{l-1l},v_l\})
	,a_{l-1l}\}+1$.
\end{enumerate}
\label{def:pathScoring}
\end{definition} 

According to Definition \ref{def:pathScoring}, the score of a path ending at $v_l$ is calculated by adding 1 to the larger value of $\mathcal{F}(p\backslash\{e_{l-1l},v_l\})$ and $a_{l-1l}$. The former stands for the score of paths ending at $v_{l-1}$, the in-neighbor of $v_l$. The latter is the edge activation level of $e_{l-1l}$. This means we are comparing the path score with the edge activation level. It is reasonable because we scale the edge weights around the average hops $\overline{A}$ of the data graph. More specifically, if an edge $e$ is trivial with a large weight, then it is scaled to a value $a$ larger than $\overline{A}$. Hence, a path passing through $e$ will get a score at least $1 + a$. In other words, the edge activation level can be thought of as the lower bound of costs for passing through that edge. With coarsening process, we can give the same integer score to paths with similar semantic scores. 
Next, we show the relation between the coarse-grained path score (Definition \ref{def:pathScoring}) and the typical fine-grained one. We show that given the edge activation level, we can develop lower and upper bounds for the fine-grained edge weight (Theorem \ref{thm:boundEdgeWeight}). Furthermore, we can obtain an upper bound for the original fine-grained path score given the coarse-grained one (Corollary \ref{thm:boundPathScore}).

\begin{theorem}
Given $a_{ij}$ for edge $e_{ij}$, its original weight $w_{ij}$ can be bounded as follows:
\begin{enumerate}
	\item  $\alpha\frac{a_{ij}-0.5}{\overline{A}}\leq w_{ij} < \alpha\frac{a_{ij}+0.5}{\overline{A}}$,
	when $a_{ij}<Rounding(\overline{A})$,
	\item 
	$\alpha\frac{a_{ij}-0.5}{\overline{A}}\leq w_{ij} < 1+\frac{(a_{ij}+0.5-2\overline{A})(1-\alpha)}{\overline{A}}$, when $a_{ij}=Rounding(\overline{A})$,
	
	\item  $1+\frac{(a_{ij}-0.5-2\overline{A})(1-\alpha)}{\overline{A}} \leq w_{ij} < 1+\frac{(a_{ij}+0.5-2\overline{A})(1-\alpha)}{\overline{A}}$, when $a_{ij}>Rounding(\overline{A})$.
\end{enumerate}

\label{thm:boundEdgeWeight}
\end{theorem}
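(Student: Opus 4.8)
The plan is to recognize that the three cases of Equation~\ref{equ:MELevel} collapse, before rounding, into a single continuous and strictly increasing map from the fine-grained weight to the unrounded activation level, and then simply invert it. Concretely, I would first substitute the reward and penalty expressions back into the definition of $a_{ij}$ to obtain the unrounded quantity
\begin{equation*}
g(w)=
\begin{cases}
\overline{A}\,\dfrac{w}{\alpha}, & w\leq\alpha,\\[1mm]
\overline{A}\Bigl(1+\dfrac{w-\alpha}{1-\alpha}\Bigr), & w>\alpha,
\end{cases}
\end{equation*}
so that $a_{ij}=Rounding(g(w_{ij}))$. Both branches are linear and agree at $w=\alpha$ with common value $g(\alpha)=\overline{A}$; hence $g$ is continuous and strictly increasing on $[0,1]$, carrying $[0,\alpha]$ onto $[0,\overline{A}]$ and $(\alpha,1]$ onto $(\overline{A},2\overline{A}]$. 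The round-half-up convention gives the basic enclosure $a_{ij}-0.5\leq g(w_{ij})<a_{ij}+0.5$, and the whole statement amounts to pushing this enclosure back through $g^{-1}$.

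Writing $R=Rounding(\overline{A})$, the first step is to decide which linear branch is active from the sign of $a_{ij}-R$. For case~1 ($a_{ij}<R$), since $a_{ij}$ and $R$ are integers we have $a_{ij}+0.5\leq R-0.5\leq\overline{A}$, so $g(w_{ij})<\overline{A}$ and thus $w_{ij}<\alpha$; inverting the reward branch $g(w)=\overline{A}w/\alpha$ on the enclosure immediately yields $\alpha\frac{a_{ij}-0.5}{\overline{A}}\leq w_{ij}<\alpha\frac{a_{ij}+0.5}{\overline{A}}$. For case~3 ($a_{ij}>R$) the symmetric argument gives $a_{ij}-0.5\geq R+0.5>\overline{A}$, forcing the penalty branch; solving $a_{ij}-0.5\leq\overline{A}\bigl(1+\frac{w_{ij}-\alpha}{1-\alpha}\bigr)<a_{ij}+0.5$ for $w_{ij}$ and simplifying produces the stated bounds.

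The delicate step, and the one I expect to be the main obstacle, is case~2, where $a_{ij}=R$ and the enclosure $R-0.5\leq g(w_{ij})<R+0.5$ straddles the breakpoint $\overline{A}$, so a priori $w_{ij}$ may lie on either side of $\alpha$. Here I would split into the subcases $w_{ij}\leq\alpha$ and $w_{ij}>\alpha$, invert the respective linear branch in each, and take the union of the two resulting intervals. The reward subcase contributes $\alpha\frac{a_{ij}-0.5}{\overline{A}}\leq w_{ij}\leq\alpha$ and the penalty subcase contributes $\alpha<w_{ij}<1+\frac{(a_{ij}+0.5-2\overline{A})(1-\alpha)}{\overline{A}}$; by continuity of $g$ at $w=\alpha$ these abut into one interval. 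To match the claimed bounds exactly I must verify that the shared endpoint $\alpha$ is interior, i.e. that the reward lower bound does not exceed $\alpha$ and the penalty upper bound is at least $\alpha$. Both reduce to the rounding inequality $R-0.5\leq\overline{A}<R+0.5$: the left half gives $a_{ij}-0.5\leq\overline{A}$, hence $\alpha\frac{a_{ij}-0.5}{\overline{A}}\leq\alpha$, and the right half gives $a_{ij}+0.5-2\overline{A}\geq-\overline{A}$, whence $1+\frac{(a_{ij}+0.5-2\overline{A})(1-\alpha)}{\overline{A}}\geq\alpha$. With these two checks the union collapses to $\alpha\frac{a_{ij}-0.5}{\overline{A}}\leq w_{ij}<1+\frac{(a_{ij}+0.5-2\overline{A})(1-\alpha)}{\overline{A}}$, completing the case and the proof.
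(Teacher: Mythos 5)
Your proof is correct and takes essentially the same route as the paper: the paper's own proof is the single remark that the bounds follow from the monotonicity and linearity of Equations~\ref{equ:reward}, \ref{equ:penalty} and \ref{equ:MELevel}, which is precisely the piecewise-linear inversion of $w_{ij}\mapsto a_{ij}$ that you carry out in detail. Your explicit handling of the breakpoint case $a_{ij}=Rounding(\overline{A})$, where the rounding interval straddles $\overline{A}$ and the two branches must be unioned, is exactly the detail the paper leaves implicit, and you resolve it correctly.
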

\begin{proof}
	It is derived due to the corresponding monotonicity and linearity of Equation \ref{equ:reward}, \ref{equ:penalty} and \ref{equ:MELevel}.
\end{proof}

\begin{corollary}
Given an $l$-length path sequence $p=\{v_1, e_{12} ,v_2,\\...,v_{l-1},e_{l-1l},v_{l}\}$ with coarse-grained score  $\mathcal{F}$, we can obtain the upper bound for its fine-grained path score by assuming the edge activation levels are $\{\mathcal{F}-(l-1),...,\mathcal{F}-2,\mathcal{F}-1\}$ w.r.t. the path sequence ordering. 
\label{thm:boundPathScore}
\end{corollary}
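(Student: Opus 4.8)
The plan is to reduce the corollary to two ingredients: a structural identity for the coarse-grained score $\mathcal{F}$, and the monotonicity of the per-edge weight upper bound from Theorem \ref{thm:boundEdgeWeight}. First I would unroll the recursion in Definition \ref{def:pathScoring}. Writing the edge activation levels along the path as $a_1,a_2,\ldots,a_{l-1}$ (so that $a_j$ is the level of the $j$-th edge), a straightforward induction on the prefix length shows
\[
\mathcal{F} = \max\Bigl\{\, l-1,\ \max_{1\le i\le l-1}\bigl(a_i + (l-i)\bigr) \Bigr\}.
\]
The inductive step uses $\mathcal{F}(p)=\max\{\mathcal{F}(p\backslash\{e_{l-1l},v_l\}),a_{l-1}\}+1$ together with the fact that adding $1$ distributes over $\max$. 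The immediate consequence I need is that for every edge index $j$ we have $\mathcal{F}\ge a_j+(l-j)$, i.e. $a_j\le \mathcal{F}-(l-j)$. In words, knowing only the coarse score $\mathcal{F}$ and the length $l$ bounds each activation level from above by exactly $\mathcal{F}-(l-j)$, which is the value the corollary assigns to the $j$-th edge.

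Next I would record that the upper bounds in Theorem \ref{thm:boundEdgeWeight} define a monotonically non-decreasing function of the activation level. Write $U(a)$ for the right-hand bound that the theorem gives to $w$ as a function of its level $a$: namely $U(a)=\alpha\,(a+0.5)/\overline{A}$ on the first branch and $U(a)=1+(a+0.5-2\overline{A})(1-\alpha)/\overline{A}$ on the other two branches. Within each branch $U$ is linear with positive slope ($\alpha/\overline{A}$ and $(1-\alpha)/\overline{A}$ respectively), so it is increasing there; the remaining point is to check that no downward jump occurs at the branch boundary $a=Rounding(\overline{A})$. This follows because the underlying map $w\mapsto a$ in Equations \ref{equ:reward}--\ref{equ:MELevel} is itself monotone non-decreasing (a reward shrinks $a$ and a penalty grows $a$ as $w$ increases), so a larger $a$ corresponds to a larger interval of admissible $w$ and hence a larger upper endpoint. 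This cross-branch consistency is the one calculation I would carry out carefully, and it is the main obstacle, since without it the constrained maximizer of $\sum_j U(a_j)$ need not sit at the upper endpoints.

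Finally I would assemble the bound. For any path with coarse score $\mathcal{F}$, each edge satisfies $w_j < U(a_j)$ by Theorem \ref{thm:boundEdgeWeight}, and $a_j\le \mathcal{F}-(l-j)$ by the identity above, so the monotonicity of $U$ gives $w_j < U(a_j)\le U(\mathcal{F}-(l-j))$. Summing over $j=1,\ldots,l-1$, the fine-grained path score $\sum_j w_j$ is strictly bounded by $\sum_{j=1}^{l-1} U(\mathcal{F}-(l-j))$, which is precisely the quantity obtained by substituting the activation-level sequence $\{\mathcal{F}-(l-1),\ldots,\mathcal{F}-2,\mathcal{F}-1\}$ into the per-edge bounds of Theorem \ref{thm:boundEdgeWeight}. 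Because the constraint $a_j\le \mathcal{F}-(l-j)$ is tight, the assumed sequence is itself a valid assignment of coarse score $\mathcal{F}$ (each term then satisfies $a_j+(l-j)=\mathcal{F}$), so it is the genuine worst case and no smaller choice of assumed levels would remain a valid upper bound. This establishes the corollary.
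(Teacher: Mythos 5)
Your proof is correct and takes essentially the same route as the paper's own (two-sentence) proof: set each edge to its maximum possible activation level---which your unrolled max-plus identity shows is exactly $\mathcal{F}-(l-j)$ for the $j$-th edge---and then apply Theorem~\ref{thm:boundEdgeWeight} edge by edge and sum. The paper leaves the unrolled identity, the cross-branch monotonicity of the per-edge upper bound, and the tightness of the assumed level sequence implicit; you supply these details, but the underlying argument is the same.
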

\begin{proof}
The upper bound is obtained by setting every edge to its maximum possible activation level. The actual value is calculated by applying Theorem \ref{thm:boundEdgeWeight} on each edge.
\end{proof}

\subsection{CG and RPG Scoring}
\label{sec:graphScoring}
In this section, we introduce the scoring of CGs and RPGs. The final top-$k$ RPGs should reflect close connections in two aspects: (1) among \textit{central keywords} and (2) from \textit{marginal keywords} to \textit{central keywords}.

The score of a CG $\mathcal{G}^c$ is given by the largest path score within it (max. aggregation score) in Equation \ref{equ:cgScore}. Using max. aggregation score has two \textit{benefits}. On one hand, we can \textit{effectively} bound the CG in terms of ``depth''. On the other hand, the search can \textit{efficiently terminated} because the score of a CG monotonically increases w.r.t. its maximum path score.
\begin{equation}
\mathcal{S}^c(\mathcal{G}^c)=\max_{p\in \mathcal{G}^c} \{\mathcal{F}(p)\}
\label{equ:cgScore}
\end{equation}

Similarly, in Equation \ref{equ:mgScore},we use the largest path score from \textit{marginal keywords} to capture the close relationships from \textit{marginal keywords} to \textit{central ones} within an RPG $\mathcal{G}^r = \mathcal{G}^c\cup\mathcal{G}^m$.
\begin{equation}
\mathcal{S}^m(\mathcal{G}^m)=\max_{p\in \mathcal{G}^m} \{\mathcal{F}(p)\}
\label{equ:mgScore}
\end{equation}

Given an RPG $\mathcal{G}^r$, $\mathcal{S}^c(\mathcal{G}^c)$ and $\mathcal{S}^m(\mathcal{G}^m)$ are scores for central and marginal part, respectively. By combining them, we can get a score for an RPG as in Equation \ref{equ:RPGScore}. Note that a multiplicative combination score is also applicable.

\begin{equation}
\mathcal{S}^r(\mathcal{G}^r) =\gamma\mathcal{S}^c(\mathcal{G}^c) + (1-\gamma)\mathcal{S}^m(\mathcal{G}^m)
\label{equ:RPGScore}
\end{equation}

Based on the above ranking schemes, we return top-$k$ RPGs to users and thus solve the Radial Pattern Keyword Search problem (Definition \ref{def:RPKSP}).
In addition, if ties exist when obtaining top-$k$ results, we can break the ties by a re-ranking operation, e.g. using the sum of edge weights. 

\vspace{1mm}
\noindent\textbf{Remarks on avoiding repetitive results.} There may be repetitive results when collecting CGs and RPGs. For CGs, there are two kinds of repetitions. First, a CG completely contains another smaller one. Second, two CGs have the same set of nodes and edges (ignoring directions), but with only different choices of central nodes. To avoid such repetitions, we simply stop expanding a node once it is identified as a central node. 
In addition, RPGs will not be repetitive as long as their corresponding CGs are not.

\subsection{Result Generation}
\label{sec:resultGen}
In this section, we discuss the procedure of result generation based on the above scoring schemes. There exist two issues no matter which combination scoring (i.e. addition or multiplication) we use for Equation \ref{equ:RPGScore}. For effectiveness, it is possible that an RPG is ranked higher with large $\mathcal{S}^c(\mathcal{G}^c)$ and small $\mathcal{S}^m(\mathcal{G}^m)$. As a consequence, such RPGs cannot reflect a close relationship among \textit{central keywords}, on which users focused more. For efficiency, in order to obtain the top-$k$ RPGs, we need to enumerate many possible CGs and check whether they can lead to a better-scored RPG. However, the score in Equation \ref{equ:RPGScore} is not monotonic as we enumerate CGs. Therefore, it can be time-consuming, and we may eventually collect top-ranked results with the aforementioned effectiveness issue. With these considerations, we propose to divide the search into two stages. First, top-$w$ CGs are returned as intermediate results. Then, top-$k$ ($k\leq w$) RPGs are returned as final results. We can collect top-$w$ CGs with highest scores by Equation \ref{equ:cgScore}, which effectively preserves close semantic relationships among \textit{central keywords}. Moreover, we can also avoid enumerating massive \textit{CGs} for finding the optimal RPGs measured by Equation \ref{equ:RPGScore}.

\vspace{1mm}
\noindent\textbf{Remarks}. In fact, the whole process for finding RPGs can be regarded as a two-level beam search. 
Beam search  \cite{Xu:beamSearch,Zhang:maverick} uses BFS (Breadth First Search) to explore the search space. At each level of the BFS tree, only a fixed number of states are preserved for generating the future search states. The leaves of the search tree correspond to the final results.
In our search scheme, CGs correspond to the intermediate search states, i.e. internal nodes. By collecting top-$w$ CGs, we actually preserve $w$ (i.e. the beam width) intermediate states. This is done in the first search level. Then, the final top-$k$ RPGs are generated and ranked by Equation \ref{equ:RPGScore} in the second level.
In practice, we find it not necessary to find hundreds or thousands of CGs with a very large beam width $w$. Such large $w$ may result in CGs whose \textit{central keywords} are loosely connected. Hence, we set $w=k$ so that users can change $w$ and $k$ together while keeping a small $w$ for close connections within \textit{central keywords}.
In the end, the ideal results with both small $\mathcal{S}^c(\mathcal{G}^c)$ (Equation \ref{equ:cgScore}) and $\mathcal{S}^m(\mathcal{G}^m)$ (Equation \ref{equ:mgScore}) can always be captured. 
Moreover, for services like keyword search, users tend to care about only a few top-ranked results. The algorithm presented in Section \ref{sec:implementation} works for the general case $k \leq w$.

\section{Implementation}
\label{sec:implementation}
\subsection{Overview}

\begin{figure}
	\centering
	\includegraphics[width=2.7in]{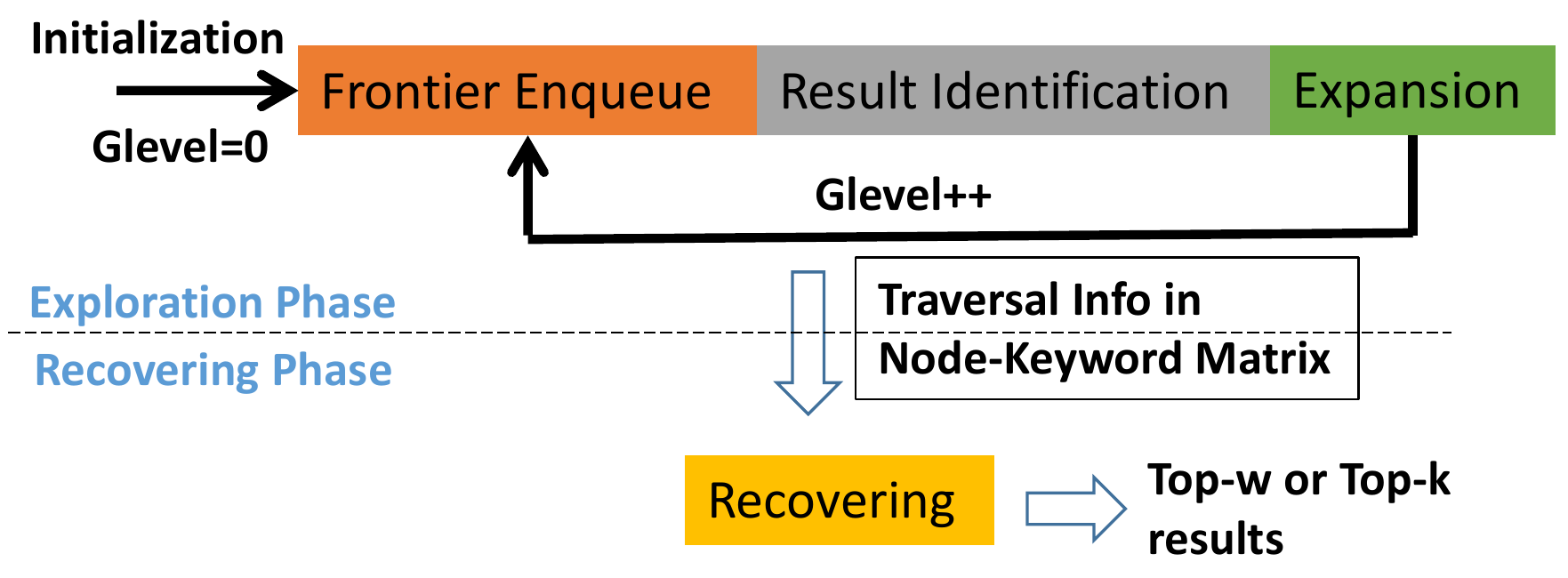}
	\caption{Framework of the two-phase algorithm. 
	Different parallel regions have different colors.
	All threads synchronize between two parallel regions.}
	\label{fig:overview}
\end{figure}

Figure \ref{fig:overview} shows the unified algorithmic framework for finding both the intermediate CGs and the final RPGs.  The respective algorithm runs twice. In the first run, it retrieves a candidate set of CGs as the intermediate results given the input \textit{central keywords}. In the second run, based on the obtained CGs and the \textit{marginal keywords}, the final RPGs are returned.
In this section, we show the implementation details of the algorithm and the respective modifications for retrieving the CGs and RPGs. 

Illustrated in Figure~\ref{fig:overview}, there are two main phases in the algorithm, i.e. \textit{Exploration Phase} and \textit{Recovering Phase}. The Exploration Phase starts exploring the graph from keywords without recording and comparing paths. It contains three main procedures in a loop. The search proceeds in a concurrent BFS-like manner level by level, with one BFS instance for one input keyword. Glevel is the global expansion level to be synchronized after each iteration in the loop. In Exploration Phase, \textit{Node-Keyword Matrix} is used to record the distances from keywords to nodes (Definition \ref{def:distKeyword2Node}), i.e. the smallest path scores (Definition \ref{def:pathScoring}). 
During expansion, this matrix helps resolve reading and writing conflicts in a parallel environment (\textbf{Theorem \ref{thrm:lockfree}}). In Recovering Phase, we make use of this matrix to recover the identified results (\textbf{Theorem \ref{thrm:recover}}) .
The output from Recovering Phase is the set of top-$w$ CGs or top-$k$ RPG w.r.t. the first and the second run of the algorithm.

The two-phase algorithm works in a fork-and-join manner. 
We use OpenMP \cite{Dagum:openmp} to manage the parallelism on CPU since it can reduce thread-creating overhead by keeping a thread pool.  
The parallelism lies in three aspects. First, different keywords expand within their BFS instances in parallel. Second, all frontiers at the same Glevel expand in parallel. Third, neighbors or edges of every frontier node are explored in parallel. 

We store graphs in Compressed Sparse Row (CSR) format. 
\riki runs on the server side and we pre-store the graph structure in the main memory of a CPU or a GPU. Moreover, we use MongoDB \cite{Chodorow:mongodb} to store text information of node entities. The mapping from keywords to nodes is thus handled by MongoDB. In the following, we denote input keyword set as $T$ without distinguishing between \textit{central} or \textit{marginal}.
\subsection{Exploration Phase}

\subsubsection{Initialization}
We initialize and pre-allocate all data structures used in the exploration phase for two reasons. First, with statically allocated memory, we do not need expensive locks for dynamic memory allocation. Second, GPUs only have a limited amount of dynamic memory. Thus, it is necessary to do a statical memory allocation for GPU implementation. 
There are three main structures.

First, we need a \textit{Frontier Flag Array}, denoted as $F$, to indicate all frontiers. These frontier nodes will be expanding at the current Glevel. The search for each keyword can be viewed as an independent BFS instance from source nodes containing that keyword. Thus, we can use one shared Frontier Flag Array \cite{Liu:iBFS} for all keywords. More specifically, if a node is a frontier w.r.t. any keyword (BFS instance), we set its frontier flag to 1. $v_i$ is a frontier only if $F_i=1$, otherwise $F_i=0$. In the beginning, we set $F_i$ of all keyword nodes to 1. In addition, the size of $F$ is $\Theta(|\mathcal{V}|)$.

Second, a \textit{Node-Keyword Matrix}, $\mathcal{H}$, with rows corresponding to nodes and columns keywords. We use $h_{ij}$ (element at i-th row and j-th column in $\mathcal{H}$) to record the distance from keyword $t_j$ to node $v_i$, i.e. $\mathcal{D}(t_j,v_i)$. Note that $\mathcal{D}(t_j,v_i)$ is also the smallest path score from $t_j$ to $v_i$.
The size of $\mathcal{H}$ is $\Theta(|\mathcal{V}||T|)$, where $|T|$ is the number of keywords.

Third, we need a \textit{Joining Flag Array}, denoted by $CF$, to indicate whether a node $v_i$ is reached by all keywords, either central or marginal. If so, $CF_i=1$ otherwise 0. This is done during Result Identification (in gray) as shown in Figure \ref{fig:overview}. The size of $CF$ is also $\Theta(|\mathcal{V}|)$ with one element per node. 

\subsubsection{Frontier Enqueue}
During expansion, the \textit{Frontier Flag Array} $F$ is modified. We extract all frontiers with value 1 in $F$. After frontier enqueue, $F$ is set to all 0 for use in the incoming expansion.
In practice, We find that it is faster to sequentially scan $F$ to extract all frontiers in CPU implementation. Whereas, for GPU, it is faster to scan $F$ and enqueue all frontiers by locked writes with all available warps. One warp consists of usually 32 threads on GPU. The reason is due to bandwidth between main memory (DDR5X) and processors of GPU, which is 480GB/s and 9X faster than our CPU with DDR4. We then set $F$ to all 0 for future use in parallel.

\subsubsection{Result Identification}
\label{sec:resultIdentify}
This stage identifies all top-$w$ CGs or top-$k$ RPGs. The identification is done for all extracted frontiers since only $h_{ij}$'s of frontiers are modified. In searching for \textit{CGs}, we rely on Theorem~\ref{thrm:identifyCG} to do CG identification and score calculation.
The exploration terminates once we collect at least $w$ CGs after a certain Glevel. The correctness based on Equation \ref{equ:cgScore} is guaranteed. This is because increasing Glevel can only result in larger path scores and thus larger scores of CGs.
\begin{theorem}
If $h_{ij}\neq \infty$ for a node $v_i$ and every input \textit{central keyword} $t_j$, then $v_i$ is a central node and locates a CG $\mathcal{G}^c$. In addition, $\mathcal{S}^c(\mathcal{G}^c)=\max_{t_j\in T}{h_{ij}}$.	\label{thrm:identifyCG}
\end{theorem}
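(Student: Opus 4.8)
The plan is to prove the two claims of Theorem~\ref{thrm:identifyCG} by directly unwinding the definitions established earlier in the paper. Recall that $\mathcal{H}$ stores $h_{ij}=\mathcal{D}(t_j,v_i)$, the distance (smallest path score) from keyword $t_j$ to node $v_i$ (Definition~\ref{def:distKeyword2Node}). I would first show the \textbf{forward implication}: if $h_{ij}\neq\infty$ for every central keyword $t_j$, then $v_i$ is a valid central node. The key observation is that $h_{ij}\neq\infty$ means $\mathcal{D}(t_j,v_i)$ is finite, so by Definition~\ref{def:distKeyword2Node} there exists some source node $u\in\mathcal{V}_{t_j}$ with a finite-score (hence existing) path to $v_i$, i.e.\ $SP(t_j,v_i)\neq\emptyset$. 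Doing this for every $c_i\in\mathcal{C}$ guarantees each central keyword has at least one shortest path reaching $v_i$.

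Next I would verify that taking $\tilde v = v_i$ yields a legitimate Central Graph satisfying the three properties stated after Definition~\ref{def:CG}. Define $\mathcal{G}^c(v_i,\mathcal{V}^c,\mathcal{E}^c)=\bigcup_{c_i\in\mathcal{C}}SP(c_i,v_i)$ exactly as in Definition~\ref{def:CG}. \emph{Totality} holds because the union ranges over all central keywords and each $SP(c_i,v_i)$ is nonempty, so every central keyword node appears. \emph{Connectivity} holds because every path in every $SP(c_i,v_i)$ terminates at the common node $v_i$, so all paths share $v_i$ and the union is connected. \emph{Multi-path optimality} holds by construction, since the union collects precisely the shortest paths $SP(c_i,v_i)$ for each keyword. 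Hence $v_i$ is indeed a central node locating a well-defined CG.

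For the \textbf{score claim}, I would compute $\mathcal{S}^c(\mathcal{G}^c)$ from Equation~\ref{equ:cgScore}, which takes the maximum path score over all paths in $\mathcal{G}^c$. I would partition the paths of $\mathcal{G}^c$ by which keyword they originate from: $\mathcal{G}^c=\bigcup_{t_j}SP(t_j,v_i)$. Within a single $SP(t_j,v_i)$, every path $p$ satisfies $\mathcal{F}(p)=\mathcal{D}(t_j,v_i)=h_{ij}$ by the very definition of the distance as the smallest path score and of $SP$ as the set achieving it. Taking the max over all paths then reduces to taking the max over keywords, giving $\max_p\mathcal{F}(p)=\max_{t_j}h_{ij}$, which is the claimed formula (with $T$ understood as the central keyword set in the CG phase).

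The main subtlety, and the step I would treat most carefully, is the precise correspondence between the matrix entries $h_{ij}$ as \emph{computed} by the BFS-style exploration and the \emph{defined} distances $\mathcal{D}(t_j,v_i)$; the theorem implicitly relies on the exploration correctly recording smallest path scores, which is the content connecting the path-scoring recursion (Definition~\ref{def:pathScoring}) to the level-synchronized expansion. I would make explicit that because the path score $\mathcal{F}$ increases by at least one per hop and the exploration proceeds level by level in Glevel, the first time a keyword's BFS reaches $v_i$ it records the minimal score, justifying $h_{ij}=\mathcal{D}(t_j,v_i)$. Everything else is a direct substitution into the definitions, so the argument is short once this identification is granted.
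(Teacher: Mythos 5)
The paper gives no proof of Theorem~\ref{thrm:identifyCG} at all (the proof is stated as ``Omitted''), so there is nothing of the authors' to compare against; your proposal in effect supplies the missing argument, and it is correct. Your decomposition --- (i) finiteness of every $h_{ij}$ gives $SP(t_j,v_i)\neq\emptyset$ for each central keyword, so taking $\tilde{v}=v_i$ in Definition~\ref{def:CG} produces a well-defined CG satisfying totality, connectivity, and multi-path optimality; (ii) every path in $SP(t_j,v_i)$ has score exactly $\mathcal{D}(t_j,v_i)=h_{ij}$ by the definition of $SP$, so the maximum in Equation~\ref{equ:cgScore} over the constituent paths reduces to $\max_{t_j}h_{ij}$ --- is precisely the definition-unwinding the theorem is built to permit, and is almost certainly what the authors intended. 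You also correctly isolate the one genuine dependency: the identification $h_{ij}=\mathcal{D}(t_j,v_i)$ is not established by this theorem but by Theorem~\ref{thrm:distGlevel} (whose proof the paper likewise only sketches), and your level-synchronization argument (``the score increases by at least one per hop, so the first arrival records the minimum'') is the right justification for it. One caveat you should make explicit if writing this up: in Equation~\ref{equ:cgScore} the maximum must be read as ranging over the constituent keyword-to-central-node paths of the CG, not over arbitrary paths in the union subgraph --- otherwise a concatenation of two shortest paths meeting at $v_i$ could exceed $\max_{t_j}h_{ij}$ and the claimed equality would fail. Your partition-by-keyword step tacitly adopts this reading, which is the one the theorem statement itself forces, but naming it would close the only loose end in the argument.
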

\begin{proof}
	Omitted.
\end{proof}

In searching for \textit{RPGs}, the top-$w$ CGs are all recovered. A different $\mathcal{H}$ is initialized for \textit{marginal keywords}. For every CG, we scan the rows (of $\mathcal{H}$) of all its \textit{central keyword} nodes to check if it is reached by all \textit{marginal keywords}. If so, an RPG is identified. 
On CPU, it is easy to do the checking because accessing $\mathcal{H}$ is easy. However, it is much more complicated for GPU implementation because the accessing of those central keyword nodes is not memory aligned. Therefore, we transfer the rows of the central keyword nodes in $\mathcal{H}$ to CPU for the checking process.
The score of an RPG is calculated by Equation \ref{equ:RPGScore}. The early termination for searching top-$k$ RPGs is based on Theorem \ref{thrm:earlyTermination}.
In particular, to guarantee PTC in Definition \ref{def:RPG}, we simply stop expanding a node if it is reached by all \textit{marginal keywords}.

\begin{theorem}
	Suppose at a certain Glevel, we have collected $k$ RPGs and the $k$-th RPG is $\mathcal{G}^r=\mathcal{G}^c\bigcup \mathcal{G}^m$. Let $CG^{set}$ denote the set of all CGs from which no RPG is derived. Then, we can terminate search for top-$k$ RPGs, if the following inequality holds. $$\mathcal{S}^r(\mathcal{G}^r)<=\gamma\min\limits_{\mathcal{G}^*\in CG^{set}}\mathcal{S}^c(\mathcal{G}^*) + (1-\gamma)\mathcal{S}^m(\mathcal{G}^m),$$ where $\gamma$ correspond to that in Equation \ref{equ:RPGScore}.
\label{thrm:earlyTermination}
\end{theorem}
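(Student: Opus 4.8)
The plan is to show that the right-hand side of the stated inequality is a valid lower bound on the score $\mathcal{S}^r$ of every RPG that could still be produced after the current Glevel, so that whenever the $k$-th collected RPG already attains a score at most this bound, none of the current top-$k$ slots can ever be improved and the search may safely halt. Throughout I would rely on two structural facts guaranteed by the algorithm: the pool of CGs is fixed after the first run, and in the second run only the \emph{marginal} keywords expand, level by level, with Glevel increasing by one per iteration.

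First I would record the monotonicity underlying the whole argument. By Definition \ref{def:pathScoring}, each additional edge raises the coarse-grained path score $\mathcal{F}$ by at least one, so $\mathcal{F}$ is non-decreasing along the level-by-level expansion and a node's stored distance $h_{ij}=\mathcal{D}(t_j,v_i)$ is finalized exactly when Glevel reaches its value (the first time a node is reached is along a shortest path). Consequently an RPG is completed precisely when its farthest marginal keyword arrives at the central-keyword nodes $\mathcal{V}_\mathcal{C}$ of its CG, and at that instant $\mathcal{S}^m(\mathcal{G}^m)=\max_{m_i}\mathcal{D}(m_i,\mathcal{V}_\mathcal{C})$ by Equation \ref{equ:mgScore}; thereafter neither $\mathcal{S}^c$ (a property of the fixed CG) nor $\mathcal{S}^m$ changes. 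Hence every already-collected RPG, including the $k$-th, has a final score, and the $k$-th RPG, being completed by the current Glevel $L$, satisfies $\mathcal{S}^m(\mathcal{G}^m)\le L$.

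Next I would bound any future RPG. An RPG not yet collected must be built on a CG from which no RPG has yet been derived, i.e. on some $\mathcal{G}^*\in CG^{set}$, since completed CGs yield no new results and no CG is created after the first run. Such an RPG completes at some Glevel strictly larger than $L$, so its marginal score exceeds $L$ and is therefore at least $\mathcal{S}^m(\mathcal{G}^m)$; its central score equals $\mathcal{S}^c(\mathcal{G}^*)\ge \min_{\mathcal{G}^*\in CG^{set}}\mathcal{S}^c(\mathcal{G}^*)$ by the definition of the minimum. Using $\gamma\in[0,1]$ in Equation \ref{equ:RPGScore} and combining these two inequalities gives, for every future RPG $\mathcal{G}^r_{\mathrm{new}}$,
$$\mathcal{S}^r(\mathcal{G}^r_{\mathrm{new}})\ \ge\ \gamma\min_{\mathcal{G}^*\in CG^{set}}\mathcal{S}^c(\mathcal{G}^*)+(1-\gamma)\mathcal{S}^m(\mathcal{G}^m),$$
which is exactly the right-hand side of the termination condition. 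When the hypothesis holds this chains to $\mathcal{S}^r(\mathcal{G}^r_{\mathrm{new}})\ge \mathcal{S}^r(\mathcal{G}^r)$, so no future RPG is strictly better than the current $k$-th one; the current top-$k$ is thus final and the search can stop. (The case $CG^{set}=\emptyset$ is vacuous with the minimum read as $+\infty$, and exact ties are resolved by the separate re-ranking step.)

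The step I expect to be the main obstacle is the monotonicity bridge in the second paragraph: making rigorous that Glevel coincides with the coarse-grained score at which a node is finalized, and hence that an uncompleted RPG necessarily completes at a strictly later Glevel with a strictly larger marginal score. This hinges on the ``first reach is shortest'' property of the level-by-level search together with the $+1$ increment in Definition \ref{def:pathScoring}; once it is pinned down, the remaining inequalities follow immediately from the definitions of $\mathcal{S}^c$, $\mathcal{S}^m$, and the convex weighting by $\gamma$.
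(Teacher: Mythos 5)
Your proposal is correct. Note, however, that the paper itself gives no argument for Theorem~\ref{thrm:earlyTermination} (its proof is literally ``Omitted''), so there is no published proof to compare against; your write-up supplies exactly the reasoning the paper's surrounding text gestures at. The three ingredients you use are the right ones and are the ones the framework makes available: (i) every not-yet-collected RPG must be derived from some $\mathcal{G}^*\in CG^{set}$, since the CG pool is frozen after the first run and a CG yields at most one RPG, whose structure and score are fixed at the moment of identification; (ii) identification Glevel and marginal score move together, by Theorem~\ref{thrm:distGlevel} and the first-reach-is-shortest property of the level-synchronized expansion, so an RPG identified after the current level has $\mathcal{S}^m$ at least that of any already-collected RPG; and (iii) $\gamma\in[0,1]$ lets you combine the two lower bounds into the right-hand side of the termination condition, which the hypothesis then chains to $\mathcal{S}^r(\mathcal{G}^r)$. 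The only blemish is an off-by-one bookkeeping detail: nodes reached during the expansion at Glevel $L$ receive distance $L+1$, so the sharp statement is $\mathcal{S}^m(\mathcal{G}^m)\le L+1$ for RPGs collected by level $L$ and $\mathcal{S}^m\ge L+2$ for later ones; since only the strict ordering between earlier and later identifications is used, this does not affect the validity of your argument. Your handling of ties (the condition uses $\le$, with re-ranking as the tie-breaker) and of the vacuous case $CG^{set}=\emptyset$ is also sound.
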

\begin{proof}
	Omitted.
\end{proof}


\begin{algorithm}[t]
	\small
	\SetKwInput{Input}{Input}
	
	\SetKwInput{Output}{Output}
	
	\Input{$\mathcal{H}$, $F$, $CF$, Glevel $l$, parameter $\alpha$, graph adjacency lists}
	
	\Output{Modified $\mathcal{H}$ and $F$}
	
	\tcc{CPU Parallel Level}
	
	\ForEach{frontier $v_f$}
	{
		
		\If{$CF_{v_f} = 1$}{
			
			continue\tcp*{To avoid repetitions.} 
			
		}

		\tcc{GPU parallel level.}
		
		\ForEach{$t_i \in T$}
		{

			$s_f \gets h_{fi}$\tcp*{Path score for $v_f$ from $t_i$.}
			
			\tcc{Check if $v_f$ is a frontier for keyword $t_i$, since $F$ is shared for all keywords.}
			
			\If{$s_f > l$}{
				
				continue\;
				
			}
			
			\ForEach{$v_n \in \mathcal{N}_o(v_f)$}{
				calculate edge activation level $a_{fn}$ w.r.t. $\alpha$\;

				\If{$a_{fn} > l$}{
					
					$F_f \gets 1$\tcp*{$v_f$ continues being frontier.}
					
					continue\;	
					
				}

				$s_n \gets h_{ni}$\tcp*{Path score for $v_n$ from $t_i$.}
				
				\If{$s_n \neq \infty$}{
					
					continue\;
					
				}

				\tcc{Ready to expand to $v_n$.}
				
				$h_{ni}\gets l + 1$\;
				
				$F_n \gets 1$\;

			}	
			
		}
		
	}

	\KwRet\; 
	
	\caption{Expansion Procedure}
	
	\label{alg:expansion}
	
\end{algorithm}
\subsubsection{Expansion}
\label{sec:expansion}
We first present the expansion behavior of frontiers in Definition \ref{def:expansionBehavior}. Based on this behavior, we develop the relationship between Glevel $l$ and the distances from keywords to nodes, as in Theorem \ref{thrm:distGlevel}. Eventually, we can find the distances from keywords to nodes. These distances are stored in $\mathcal{H}$ and used to score CGs and RPGs in the Result Identification step.
\begin{definition}
(\textit{Expansion Behavior}).
Given a frontier $v_f$ and a keyword $t_i$, the \textit{expansion behavior} at Glevel $l$ is as follows.
We scan the out-edges from $v_f$, and for every neighbor $v_n$, we skip $v_n$ if it is already reached by $t_i$. Otherwise, we expand $v_f$ to $v_n$ only if $a_{fn}\leq l$.
\label{def:expansionBehavior}
\end{definition}
\begin{theorem}
Suppose $v_f$ expand to $v_n$ at Glevel $l$ w.r.t. keyword $t_i$, then $\mathcal{D}(t_i,v_n) = l+1$ with the expansion behavior in Definition \ref{def:expansionBehavior}. Furthermore, $h_{ni}=l+1$ in Node-keyword Matrix.
\label{thrm:distGlevel}
\end{theorem}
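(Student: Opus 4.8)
The plan is to derive the statement as the inductive step of a stronger two-part invariant that ties the matrix entries to true keyword-to-node distances, maintained across global levels. Concretely, I would prove by induction on the global level $l$ that immediately after the expansion at level $l$ finishes, the following holds for every keyword $t_i$ and node $v$: (\emph{correctness}) whenever $h_{vi}\neq\infty$ we have $h_{vi}=\mathcal{D}(t_i,v)$, and (\emph{completeness}) $h_{vi}\neq\infty$ if and only if $\mathcal{D}(t_i,v)\leq l+1$. The base case $l=-1$ is the initialization: keyword nodes carry $h=0$, all others $\infty$, and a single-node path has score $0$ by Definition~\ref{def:pathScoring}, so both halves match distance $0$ exactly.

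For the inductive step, which contains the theorem, I would bound $\mathcal{D}(t_i,v_n)$ from both sides. For the upper bound, since $v_f$ triggers the expansion it satisfies $h_{fi}\leq l$, so by the inductive hypothesis there is a path $p_f$ from $t_i$ to $v_f$ with $\mathcal{F}(p_f)=h_{fi}$; appending $e_{fn}$, whose activation level obeys $a_{fn}\leq l$ by the condition of Definition~\ref{def:expansionBehavior}, yields by the recursion of Definition~\ref{def:pathScoring} a path of score $\max\{\mathcal{F}(p_f),a_{fn}\}+1\leq l+1$, so $\mathcal{D}(t_i,v_n)\leq l+1$; meanwhile the algorithm assigns exactly $h_{ni}\gets l+1$, which settles the ``furthermore'' clause. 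For the lower bound, the expansion fires only when $v_n$ is still unreached ($h_{ni}=\infty$) at the moment of expansion, and by the completeness half of the hypothesis (in force as the invariant at level $l-1$) this forces $\mathcal{D}(t_i,v_n)>l$. Combining the two gives $\mathcal{D}(t_i,v_n)=l+1=h_{ni}$.

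To carry the induction forward I must re-establish completeness at level $l+1$: every node $v$ with $\mathcal{D}(t_i,v)=l+1$ is actually reached during level $l$. I would take a shortest path to $v$, split off its last edge $e_{fv}$ with predecessor $v_f$, and show $\max\{\mathcal{D}(t_i,v_f),a_{fv}\}=l$ (the $\leq$ direction from the path score, the $\geq$ direction because a shorter witness for $v_f$ extended by $e_{fv}$ would otherwise beat $l+1$). The step that needs the most care — and what I expect to be the main obstacle — is showing that $v_f$ is still an \emph{active} frontier precisely at level $l=\max\{\mathcal{D}(t_i,v_f),a_{fv}\}$, so that $e_{fv}$ is examined at the right moment and $v$ is not missed.

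I would isolate this as a frontier-persistence lemma: a node $v_f$ first becomes a frontier at level $\mathcal{D}(t_i,v_f)$ when it is reached and its flag $F_{v_f}$ is set, and at each later level it re-flags itself exactly when it still owns an out-edge whose activation level exceeds the current level, hence it stays a frontier until every out-edge $e_{fn}$ has been examined at its own threshold $\max\{\mathcal{D}(t_i,v_f),a_{fn}\}$. Tracing the three branches of Algorithm~\ref{alg:expansion} — re-flag on a too-high edge, expand on a fresh neighbor, skip on an already-reached neighbor — shows each edge is examined once at the correct level and that re-examining a consumed edge is harmless, yielding both completeness and the soundness used above. The correctness half of the invariant then transfers for free, since every newly written entry equals $l+1$, which we have just shown to be the true distance.
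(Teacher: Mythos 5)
Your proposal is correct, but there is little in the paper to compare it against: the paper's entire proof of Theorem~\ref{thrm:distGlevel} is the single assertion that ``the expansion behavior leads to the path scores in Definition~\ref{def:pathScoring},'' with all details omitted. Your induction supplies exactly the missing content, and it correctly identifies the two things any honest proof of this statement needs. First, the theorem's conditional claim cannot be proved in isolation: the upper bound $\mathcal{D}(t_i,v_n)\le l+1$ requires soundness of the triggering entry $h_{fi}$ (a reached entry equals the true keyword-to-node distance), while the lower bound $\mathcal{D}(t_i,v_n)>l$ requires completeness (every node at distance at most $l$ has already been written), so a simultaneous induction on Glevel over both invariants is the right vehicle, and the recursion $\mathcal{F}(p)=\max\{\mathcal{F}(p\backslash\{e_{l-1l},v_l\}),a_{l-1l}\}+1$ is what makes the two bounds meet at $l+1$. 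Second, completeness genuinely hinges on the frontier-persistence point you isolate: a reached node keeps re-flagging itself (lines 10--12 of Algorithm~\ref{alg:expansion}) until each out-edge $e_{fn}$ has been examined at its threshold level $\max\{\mathcal{D}(t_i,v_f),a_{fn}\}$, and a node re-activated later through another keyword's shared flag harmlessly skips already-reached neighbors; without this lemma a high-activation edge could be ``missed'' and a later write would overshoot the true distance. One caveat worth recording: your invariants hold for the pure expansion behavior of Definition~\ref{def:expansionBehavior}, which is how the theorem is scoped, but not for Algorithm~\ref{alg:expansion} verbatim, because the $CF$ check in lines 2--3 deliberately suppresses expansion from nodes reached by all keywords (to avoid repetitions and to enforce PTC); downstream of such a suppressed node, completeness --- and hence the equality $\mathcal{D}(t_i,v_n)=l+1$ --- can fail, so the theorem should be quoted for the modified search only with that qualification.
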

\begin{proof}
The expansion behavior leads to the path scores in Definition \ref{def:pathScoring}.
The details are omitted due to limited space.
\end{proof}

As shown in Algorithm \ref{alg:expansion},  the expansion procedure first takes a frontier $v_f$. Then, it handles the expansion of $v_f$ with respect to every keyword $t_i$ from line 4. Line 2 works for different purposes for CG search and RPG search. For the former, line 2 is used to avoid repetitions in the way we stop expanding from identified central nodes. For the latter, the goal is to guarantee PTC in Definition \ref{def:RPG}. From line 8 to 17, we handle all neighbors of $v_f$. Theorem \ref{thrm:lockfree} shows the lock-free property.

\begin{theorem}
	In Algorithm \ref{alg:expansion}, the writes and reads are all lock-free without affecting final results.
	\label{thrm:lockfree}
\end{theorem}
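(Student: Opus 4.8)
The plan is to show that every shared cell touched inside Algorithm~\ref{alg:expansion} is accessed in a way that is insensitive to the interleaving of threads, so that the lock-free execution terminates with exactly the same $\mathcal{H}$ and $F$ that a serialized execution would produce. First I would catalogue the shared structures and the kind of access each receives within one expansion region: $CF$ is only read (it is written in the separate, synchronized Result-Identification stage, not here); $F$ is only written, and every write is the constant assignment $F_f\gets 1$ or $F_n\gets 1$; and $\mathcal{H}$ is both read (as $s_f\gets h_{fi}$ and $s_n\gets h_{ni}$) and written, but its only write is $h_{ni}\gets l+1$. The crucial structural fact, which I would state first, is that the Glevel $l$ is synchronized across the whole parallel region, so within one iteration every write to $\mathcal{H}$ stores the identical value $l+1$ and every write to $F$ stores the identical value $1$; call this the \emph{same-value} property.

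From the \emph{same-value} property the write--write races are immediately benign: two threads that target the same $h_{ni}$ (same keyword, same neighbour, different source frontiers) or the same $F$ entry both store the same value, so the final contents are independent of which write lands last and agree with any serialization. I would then dispatch the read--write races on $\mathcal{H}$ case by case. For the neighbour read $s_n\gets h_{ni}$ guarded by $s_n\neq\infty$: a thread either observes the pre-write value $\infty$ and proceeds to write $l+1$, or observes the already-written $l+1$ and skips; in both branches the cell ends at $l+1$, which by Theorem~\ref{thrm:distGlevel} is exactly $\mathcal{D}(t_i,v_n)$. If instead $h_{ni}$ already held an earlier-level value $l'+1<\infty$, the guard forbids any write at this level, so the value is stable and no race occurs on it. Hence the reachability decision and the stored distance are correct regardless of interleaving, and the companion write $F_n\gets 1$ correctly flags $v_n$ as a next-level frontier.

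The delicate case, which I expect to be the main obstacle, is the self-read $s_f\gets h_{fi}$ used to test $s_f>l$, because $v_f$'s own row can be concurrently updated when another frontier expands into $v_f$ for the same keyword $t_i$ at this very level. Here I would argue that both observable outcomes lead to the same correct state: if $h_{fi}$ is still $\infty$, the test $\infty>l$ skips expansion of $v_f$ for $t_i$; if it has just been set to $l+1$, then $l+1>l$ also skips. Both are correct, since a node first reached at level $l+1$ must not itself expand until the next iteration, and by the path-scoring recursion (Definition~\ref{def:pathScoring}) deferring it changes neither any $\mathcal{D}(t_i,\cdot)$ nor any frontier flag. Finally I would note that $F$ is reset to all-zero in the synchronized Frontier-Enqueue step preceding this region, so the monotone $\gets 1$ writes leave $F$ marking precisely the union over keywords of nodes that either were newly reached or retain an uncrossed edge; combined with the read-only consistency of $CF$, this establishes that the lock-free $\mathcal{H}$ and $F$ coincide with the serialized results, proving the claim.
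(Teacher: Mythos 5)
Your proposal is correct and follows essentially the same route as the paper's own proof: the write--write races are benign because all concurrent writes within a Glevel store the same values ($l+1$ into $\mathcal{H}$, $1$ into $F$), and the two racy reads are handled by the same case analysis (the guarded cell is either $\infty$ or already $l+1$, and both observations lead to the correct skip/expand decision). Your version is merely more explicit — cataloguing the shared structures, invoking Theorem~\ref{thrm:distGlevel}, and noting the synchronized reset of $F$ — but the decomposition and the key observations coincide with the paper's argument.
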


\begin{proof}
	For writes, all writes are writing the same value, thus they are lock-free, including line 11, 16 and 17 in Algorithm \ref{alg:expansion}. For reads, we show that all if-conditions return the correct truth-value from unlocked reads, including line 5 and 13. At line 6, we do not expand $v_f$ if it is not a frontier at the current level for keyword $t_i$. $s_f$ may be set to $l+1$ in current level by other threads or remain $\infty$. At line 14, as long as $s_n\neq \infty$, we do not expand to it since $v_n$ may be reached in current level or previous ones.
\end{proof}

\textbf{Scheduling}. There are a few differences in the parallel strategies of CPU and GPU implementations. CPU implementation is more coarse-grained (line 1).  We use one thread for different frontiers for all keywords. In comparison, GPU parallelism starts at line 4 in Algorithm \ref{alg:expansion}. We use one warp to handle one keyword expansion of one frontier node. At line 8, the threads within the warp handle all neighbors of the corresponding frontier in parallel w.r.t. keyword $t_i$, so that these threads can share memory access to the adjacency lists. There are two main reasons for the different parallel strategies. First, the threads of a GPU is far more than multi-core CPUs, thus GPU implementations can be more fine-grained. Second, execution divergence can hurt efficiency very much on GPUs. Within a warp of a GPU, the execution of threads may diverge when there are if-branches. This divergence can hurt the performance and higher level of parallelism results in more execution divergence. In comparison, CPU threads are more powerful in handling such divergence. Hence, we use different parallel granularities.

\textbf{Load balancing}. On GPU, the load balancing problem mainly comes from different number of neighbors of nodes. We handle all neighbors of a node by one warp. However, our edge weighting strategy gives massive same-labeled edges with large weights. Hence, processing of these edges quickly skips the loop at line 10. Moreover, nodes with massive same-labeled edges will have fewer chances to participate in search. Therefore, it will not cause severe load balancing problem. On CPU, we use the \textit{dynamic} scheduling feature of OpenMP \cite{Dagum:openmp}. OpenMP can automatically handle the load balancing by letting idle threads take more jobs.

\textbf{Time and space complexity}. For time complexity, the expansion process of one keyword is quite like BFS expansion. For standard BFS, a node is only visited once. Thus, the respective time complexity is $\mathcal{O}(|\mathcal{V}|+|\mathcal{E}|)$, where $|\mathcal{V}|$ and $|\mathcal{E}|$ denote the number of nodes and edges. However, it is different for Algorithm \ref{alg:expansion}. The reason is that nodes and edges may be visited more than once if the edge activation level is not matched by the current Glevel. This adds up to the time complexity. As a result, the time complexity of Algorithm \ref{alg:expansion} in sequential execution is $\mathcal{O}(|\mathcal{V}||\mathcal{E}||T|l_{max})$, where $|T|$ is the number of keywords and $l_{max}$ denotes the maximum level in consideration. Furthermore, since the parallelism has no overlap, the complexity for Algorithm \ref{alg:expansion} in parallel is $\mathcal{O}(\frac{|\mathcal{V}||\mathcal{E}||T|l_{max}}{Tnum})$, where $Tnum$ is the number of threads. For space complexity, the major cost arises from graph storage $\Theta(|\mathcal{V}|+|\mathcal{E}|)$, the Frontier Flag Array $\Theta(|\mathcal{V}|)$, the Joining Flag Array $\Theta(|\mathcal{V}|)$, and the node-keyword matrix $\Theta(|T||\mathcal{V}|)$. Altogether, the space complexity is $\mathcal{O}(|T||\mathcal{V}|+|\mathcal{E}|)$.

\subsection{Recovering Phase}
\label{sec:recoveringPhase}
\begin{algorithm}[t]
	\small
	\SetKwInput{Input}{Input}
	\SetKwInput{Output}{Output}	
	\Input{$\mathcal{H}$, identified central nodes, graph adjacency lists, keyword set $T$}
	
	\Output{top-$w$ CGs}
	
	
	Initialize a container $\textit{R}$ to keep result CGs\;
	
	\ForEach{$v_c$ in identified central nodes}{
		\tcc{Create one queue for each keyword $t_i\in T$.}
		
		Add $v_c$ to $|T|$ queues $q_1, q_2,... ,q_{|T|}$\;
		
		\While{$\exists q_i \neq \emptyset$}{
			\ForEach{$t_i \in T$}{
				$v_q\gets q_i.next()$\;
				
				\ForEach{$v_n \in \mathcal{N}_i(v_q)$}{
					\If{$h_{qi}=1+\max\{a_{nq}, h_{ni}\}$}{
						
						Recover node $v_n$ and edge $e_{nq}$\; 	
						
						Append $v_n$ into $q_i$ if $h_{n_i}\neq 0$\;
						
					}
					
				}		
				
			}	
			
		}
		
		Append the extracted graph to $R$\tcp*{with lock}	
		
	}	
	\KwRet\;	
	\caption{Recovering Procedure of Central Graphs}	
	\label{alg:recover}		
\end{algorithm}

The Recovering Phase is a reverse procedure of Exploration Phase. Specifically, we recover all paths covered by CGs and RPGs. 
In fact, to recover those paths is equivalent to recover all their edges. In other words, we should be able to recover $v_n$ from $v_q$, if $v_n$ expanded to $v_q$ via edge $e_{nq}$ during expansion.
We show that this is achieved with Lemma \ref{lem:recover}.
\begin{lemma}
	$h_{qi}=\max\{h_{ni},a_{nq}\} + 1$ if and only if $v_n$ expanded to $v_q$ w.r.t. keyword $t_i$ during exploration phase.
	\label{lem:recover}
\end{lemma}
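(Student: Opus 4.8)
The plan is to connect the arithmetic relation $h_{qi}=\max\{h_{ni},a_{nq}\}+1$ to the operational conditions under which Algorithm~\ref{alg:expansion} performs an expansion, and then to invoke Theorem~\ref{thrm:distGlevel} to read off the Glevel at which each node is reached. Recall from Definition~\ref{def:expansionBehavior} and lines 6, 10 and 14 of Algorithm~\ref{alg:expansion} that $v_n$ expands to $v_q$ at some Glevel $l$ precisely when three conditions hold together: $v_n$ is an active frontier ($h_{ni}\leq l$), the edge is eligible ($a_{nq}\leq l$), and $v_q$ is still unreached ($h_{qi}=\infty$) at the start of that Glevel. By Theorem~\ref{thrm:distGlevel} such an expansion sets $h_{qi}=l+1$. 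The whole argument therefore reduces to showing that the determining Glevel is exactly $l=\max\{h_{ni},a_{nq}\}$.

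For the ``only if'' direction, I would start from an actual expansion of $v_n$ to $v_q$ at Glevel $l$, so that $h_{qi}=l+1$ by Theorem~\ref{thrm:distGlevel}. The two frontier/edge conditions force $l\geq\max\{h_{ni},a_{nq}\}$. To obtain equality, set $l^{*}=\max\{h_{ni},a_{nq}\}\leq l$ and observe that $v_n$ is already eligible at $l^{*}$: it is active starting at Glevel $h_{ni}$, and lines 10--11 repeatedly keep $F_n=1$ through every Glevel below $a_{nq}$ because the edge $e_{nq}$ is found ineligible there, so in either case $v_n$ is still a frontier at $l^{*}$. Because the actual expansion at $l$ needs $v_q$ unreached at the start of Glevel $l$, the node $v_q$ is unreached at every Glevel up to $l$, in particular at $l^{*}$; were $l^{*}<l$, the expansion would already fire at $l^{*}$ (a contradiction), so $l^{*}=l$ and hence $h_{qi}=\max\{h_{ni},a_{nq}\}+1$.

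For the ``if'' direction, I would assume $h_{qi}=\max\{h_{ni},a_{nq}\}+1$ and put $l=\max\{h_{ni},a_{nq}\}$. Finiteness of $h_{qi}$ makes $h_{ni}$ and $a_{nq}$ finite with $h_{ni}\leq l$ and $a_{nq}\leq l$, and $h_{qi}=l+1$ together with Theorem~\ref{thrm:distGlevel} says $v_q$ is first reached at Glevel $l$. The same frontier-survival observation shows $v_n$ is an active, edge-eligible frontier at Glevel $l$; since $v_q$ is unreached at the start of $l$, the expansion from $v_n$ to $v_q$ is exactly the operation Algorithm~\ref{alg:expansion} carries out, which is what ``expanded to'' means. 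I would also cross-check this against the recursive score in Definition~\ref{def:pathScoring}: $\max\{h_{ni},a_{nq}\}+1$ is precisely the score of a shortest path to $v_n$ extended by $e_{nq}$, so the relation holds iff routing through $v_n$ attains $\mathcal{D}(t_i,v_q)$, i.e. $v_n$ is a shortest-path predecessor of $v_q$.

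The main obstacle I anticipate is the parallel semantics of ``expanded to.'' At a single Glevel several frontiers may attempt to write $h_{qi}$ simultaneously, so no single thread is canonically the one that expands. I would resolve this by leaning on Theorem~\ref{thrm:lockfree}: every such concurrent write stores the same value $l+1$ and every read returns a value consistent with the final matrix, so ``$v_n$ expanded to $v_q$'' is well-defined as ``$v_n$ was an \emph{eligible} frontier for $v_q$ at the determining Glevel,'' independent of scheduling. This is exactly the notion the Recovering Phase needs, since capturing all eligible predecessors is what yields the multi-path optimality of CGs and RPGs.
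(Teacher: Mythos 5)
Your proof is correct, but it takes a genuinely different---and far more explicit---route than the paper's. The paper disposes of Lemma~\ref{lem:recover} in a single line, deriving it from Definition~\ref{def:pathScoring}: since the entries of $\mathcal{H}$ store minimum path scores and the recurrence assigns any path ending with edge $e_{nq}$ the score $\max\{\text{prefix score},a_{nq}\}+1$, the stated equality holds exactly when $e_{nq}$ is the final edge of a shortest path to $v_q$, which is what an expansion records. Your argument instead reconstitutes this operationally from Algorithm~\ref{alg:expansion}: you pin down the unique Glevel $\max\{h_{ni},a_{nq}\}$ at which the three firing conditions (active frontier, activated edge, unreached target) first co-occur, using the frontier-survival mechanism of lines 10--11, and then invoke Theorem~\ref{thrm:distGlevel} to convert that level into the value of $h_{qi}$. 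What your version buys: it makes explicit the frontier-persistence fact that the paper's one-liner silently presupposes, and, more importantly, it surfaces and resolves the parallel-semantics ambiguity---under concurrent same-level writes, ``expanded to'' must mean ``was an eligible shortest-path predecessor,'' not ``this particular thread's write landed,'' or else the \emph{if} direction would be scheduling-dependent---which is precisely the reading the Recovering Phase needs for multi-path optimality, and which the paper leaves implicit behind Theorem~\ref{thrm:lockfree}. What the paper's version buys: brevity, and independence from operational details, since the declarative characterization holds for any correct realization of the exploration phase rather than for the specific control flow of Algorithm~\ref{alg:expansion}. Your closing cross-check against Definition~\ref{def:pathScoring} is, in effect, the paper's entire proof.
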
 
\begin{proof}
	It is derived from Definition \ref{def:pathScoring}. 
\end{proof}

We show the recovering procedure of CGs in Algorithm \ref{alg:recover}. The recovering procedure of RPGs is similar. Only two changes need to be made. First, we change $v_c$ (line 2) to be the set of central keyword nodes of a CG. Second, for every marginal keyword, we only recover the paths with smallest scores ending at the central keyword nodes. These paths correspond to $SP(m_i,\mathcal{V}_\mathcal{C})$ for all marginal keywords in Definition \ref{def:RPG}.  
The recovering procedure is implemented only on CPUs, because it requires dynamic memory allocation to record all result subgraphs. Moreover, the recovering process has much execution divergence since it needs to decide whether or not to extract a certain node. As shown in Algorithm \ref{alg:recover}, at line 2 we loop through all CGs from Section \ref{sec:resultIdentify}. From line 4 to 10, it is a standard independent BFS for each keyword. 

\textbf{Load balancing}. We parallelize Algorithm \ref{alg:recover} in a coarse-grained way from line 2. If we choose to parallelize the inner loops, we will need to synchronize at the end of every iteration of the outer loop. It is very expansive. In addition, we use \textit{dynamic} scheduling feature provided by OpenMP \cite{Dagum:openmp}.

\textbf{Time and space complexity}. Since recovering CGs and RPGs has the same complexity, we focus on the complexity of Algorithm \ref{alg:recover} for CGs. For time complexity, we can view the recovering process of one keyword as a standard BFS instance for a CG. Therefore, the complexity in sequential execution is $\mathcal{O}(|R||T|(|\mathcal{V}|+|\mathcal{E}|))$, where $|T|$ and $|R|$ are the number of keywords and identified CGs, respectively. As a result, the parallel time cost is $\mathcal{O}(\frac{|R||T|(|\mathcal{V}|+|\mathcal{E}|)}{Tnum})$ due to non-overlap execution, where $Tnum$ is the number of threads. For space complexity, since we continue using data structures of Algorithm \ref{alg:expansion}, we need $\mathcal{O}(|T||\mathcal{V}|+|\mathcal{E}|)$ space cost. Moreover, the space for container $R$ in Algorithm \ref{alg:recover} is bounded by $\mathcal{O}(|R||\mathcal{V}|)$, where $|R|$ is the number of CGs to recover. Therefore, the total space cost is $\mathcal{O}((|R|+|T|)|\mathcal{V}|+|\mathcal{E}|)$.

\section{Experiments}
\label{sec:experiment}

\subsection{Settings}
\subsubsection{Competitors}
\begin{itemize}[leftmargin=3mm,noitemsep]
	\item \textbf{\riki.} It is the proposed approach. In particular, we denote the corresponding CPU and GPU versions as \textbf{\riki-CPU} and \textbf{\riki-GPU}, respectively. 
	
	\item \textbf{\wiki} \cite{Yang:wikisearch}. It is a parallel keyword search approach that can run in real time. We implement both CPU and GPU parallel version of \wiki. They are denoted as \textbf{\wiki-CPU} and \textbf{\wiki-GPU}, respectively.
	
	\item \textbf{BANKS-\rom{2}} \cite{Kacholia:bidirection}. It is an established and representative work of typical keyword search approach, which is also a revised version of BANKS-\rom{1} \cite{Aditya:banks}. As for the reason for choosing BANKS-\rom{2}, please refer to the discussions in Section \ref{sec:relatedWork}.
\end{itemize}

  
\subsubsection{Datasets and Queries}
\begin{table}
	\centering
	\small
	\caption{Wikidata Dumps}
	\begin{tabular}{c|c|c||c|c}
		dataset & \# nodes & \# edges & $\overline{A}$& Deviation\\ \hline
		WikiSmall & 15.1M & 124M & 3.87 & 0.81 \\
		WikiLarge & 30.6M & 271M & 3.68 &0.98\\ 
	\end{tabular}
	\label{tble:datasets}	
\end{table}
As shown in Table \ref{tble:datasets}, we use two publicly available Wikidata dumps, WikiSmall and WikiLarge, from \cite{Yang:wikisearch}. The last two columns show the sampled average shortest distances and the sample deviation. We sample ten thousand pairs of nodes for estimation of $\overline{A}$.

We obtain keyword queries from the keyword lists given by papers accepted in AAAI 2013 and 2014. They are available at UCI machine learning repository \cite{Dua:uci}. These keywords naturally serve as reasonable queries. One one hand, they directly describe the topics of the scientific articles. On the other hand, they are commonly given in an ordering with decreased importance and relevancy.
Thus, we can use the first several keywords as \textit{central keywords} and the rest as \textit{marginal keywords}. More details of query generation are shown at the beginning of Section \ref{sec:efficiency} and \ref{sec:effectiveness}.
 
\subsubsection{Parameters to Study and Experiment Platform.} 
We list the parameters to study with default values in Table \ref{tble:param}. 
\begin{table}
	\centering
	\small
	\caption{Parameters to study}
	\begin{tabular}{ccc}
		parameter & meaning & default\\\hline
		Topk     & the number of RPGs to be returned & 20 \\
		$\alpha$ & the parameter introduced in Section \ref{sec:edgeActivation} & 0.5 \\
		cknum & the number of \textit{central keywords} & 2\\
		mknum & the number of \textit{marginal keywords} & 4\\
		Knum  & cknum + mknum & 6 \\
		Tnum  & the number of CPU threads & 30\\
		  
	\end{tabular}

\label{tble:param}
\end{table} 
 
All programs are written in C++ 4.8.5 with Cuda 8.0 and OpenMP 3.1. We turn on -O3 flag for compilation. All tests were run on a single machine with CentOS 7.0 and 52-core Intel(R) Xeon(R) Platinum 8170 CPUs @ 2.1GHz. The main memory of CPU is of 1 TB DDR4 with 64-bit data width. The GPU is GTX 1080 Ti with 11 GB memory with DDR5X which has 352-bit memory bus width.  
In all experiments, we set the time limit as 500 seconds. If such limiting parameters are reached, we return whatever is obtained. We set maximum allowed Glevel to 20.

\subsection{Efficiency Studies}
\label{sec:efficiency}
\begin{figure*}[h]
\centering
	\subfloat[WikiSmall]{\includegraphics[width=3.35in]{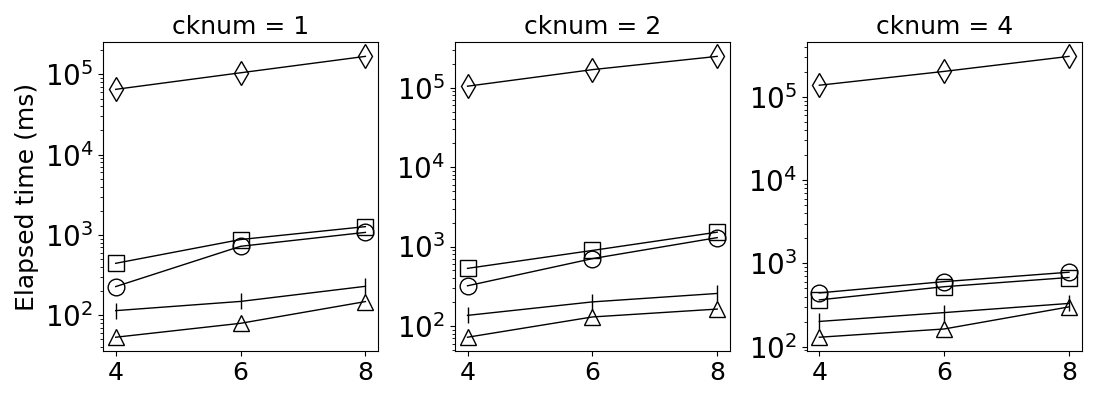}\label{subfig:wiki2017vknum}}
	\subfloat[WikiLarge]{\includegraphics[width=3.35in]{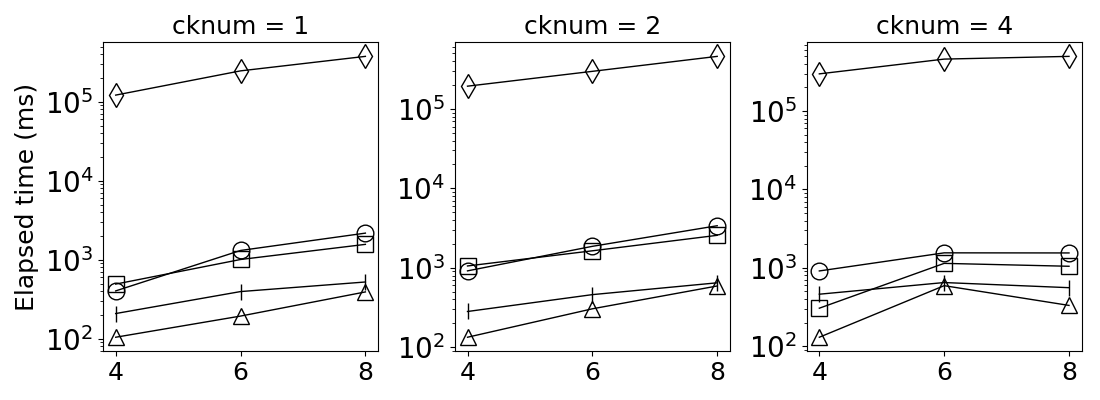}\label{subfig:wiki2018vknum}}

\caption{Vary Knum on WikiSmall and WikiLarge. X-axis shows mknum. Each sub-figure has a fixed cknum. \textbf{Due to limited space, we use the same legend as Figure \ref{fig:varyTopk}}.}
\label{fig:varyKnum}
\end{figure*}

\begin{figure*}[h]
	\centering
	\includegraphics[width=6.6in]{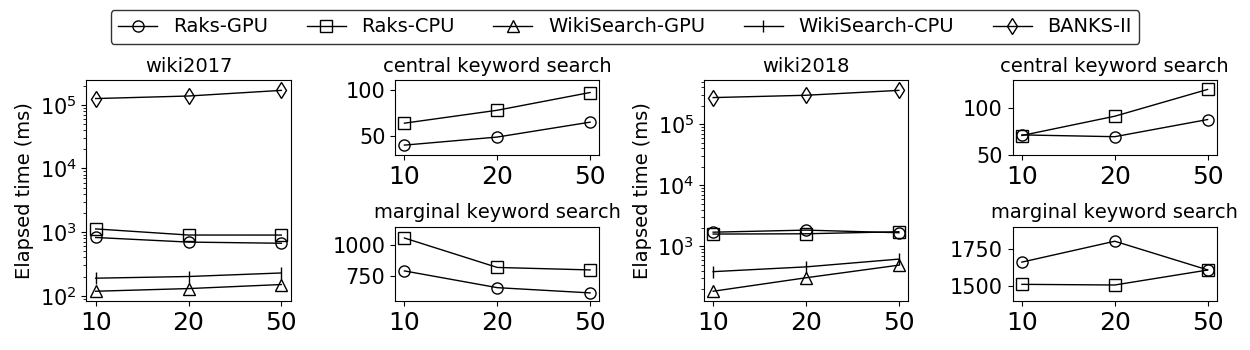}

\caption{Vary Topk on WikiSmall and WikiLarge. X-axis shows Topk values. The two sub-figures on the right of each figure show the broken-down time of central keyword search (i.e. the first run) and marginal keyword search (i.e. the second run) for respective datasets.}
\label{fig:varyTopk}
\end{figure*}

\begin{figure}[h]
	\centering
	\subfloat[Vary Tnum. X-axis shows Tnum.]{\includegraphics[width=\linewidth]{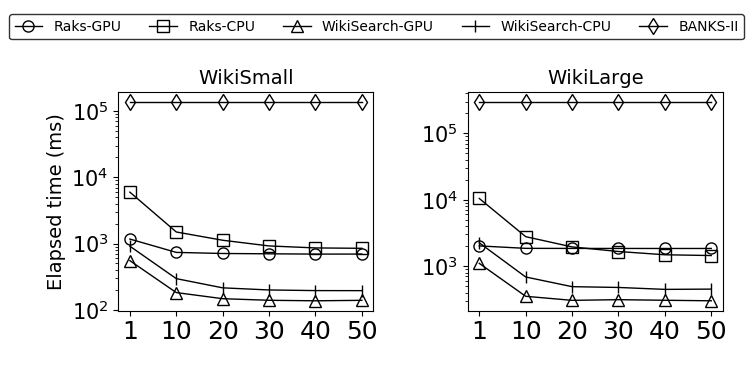}\label{subfig:varyTnum}}\\
	\subfloat[Vary $\alpha$. X-axis shows values of $\alpha$]{\includegraphics[width=\linewidth]{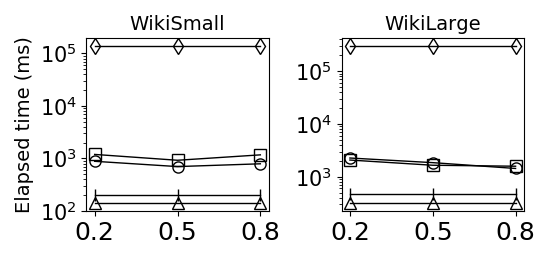}\label{subfig:varyAlpha}}
	
\caption{Vary Tnum and $\alpha$ on WikiSmall and WikiLarge.}
\label{fig:TnumAndAlpha}
\end{figure}

\begin{figure}[h]
	\centering
	\includegraphics[width=2.3in]{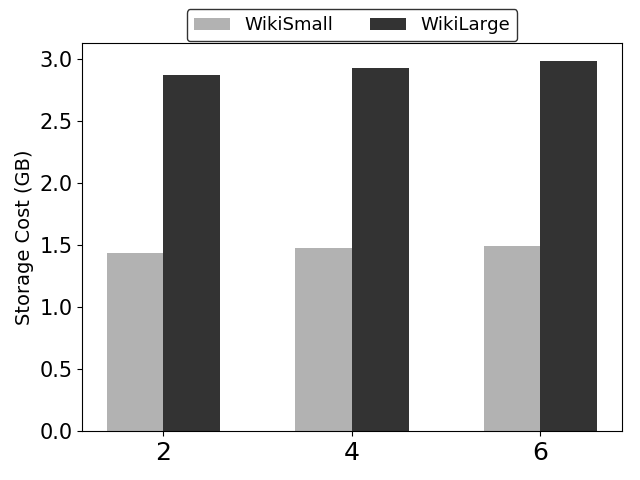}
	\caption{Peak storage cost of \riki of two datasets on GPU. We fix cknum=4. X-axis shows mknum (from 2 to 6).}
	\label{fig:storageCost}
\end{figure}

\textbf{Query Generation}. 
To study the effect of the number of keywords, we take the combination of three settings of cknum (1, 2, 4) and mknum (2, 4, 6). We randomly select 50 queries for every different setting of Knum (3, 4, 5, 6, 7, 8, 10), which is the sum of cknum and mknum. Then we use the first several keywords as \textit{central keywords} and the rest as \textit{marginal keywords}. For example, to get cknum=1 and mknum=2, we take the selected 50 queries of Knum=3. The running time is averaged over the 50 queries.

\vspace{1mm}\noindent
\textbf{Exp-1 (vary Knum)}. In Figure \ref{fig:varyKnum}, we show the effect of Knum on different datasets. Since our competitors do not implement radial-pattern search, we compare the efficiency w.r.t. the same Knum, i.e. total number of keywords. From the result, we can conclude that \riki is at least 2 orders faster than \textit{BANKS-\rom{2}} which is a CPU sequential algorithm. \textit{BANKS-\rom{2}} fails producing real-time response for several reasons. Compared to \textit{BANKS-\rom{1}}, \textit{BANKS-\rom{2}} adds forward search to avoid wasting time on solely backward probing. Whereas, when the graph size gets larger, the forward search also needs to probe massive nodes. It hurts the performance a lot. In comparison, \riki runs much faster, as many edges are explored in parallel.

We also notice that \wiki always gives the best performance. The reason is that \riki coarsens the graph in terms of edges. Nevertheless, \wiki coarsens the graph in terms of nodes. It is based on a node-wighted graph, thus it is even more coarse-grained. Specifically, take \textit{human} node as an example. In node-weighted graph, this node with millions of same-labeled edges (\textit{instance of}) is given a high weight. Thus, it is hardly reachable during the search. In contrast, we model the graph as edge weighted. Nodes like \textit{human} may be reached by informative edges that do not have massive same-labels (e.g. \textit{part of} and \textit{humanity}). Once \textit{human} node is activated, it tries to expand to its neighbors. As the number of its neighbors is huge, it tests all out-going edges. This process may hurt the performance. Although \riki spends more time compared to \wiki, it can still finish search within around one second.

As Knum increases, the processing time of \riki only increases by hundreds of milliseconds. Whereas, the running time \textit{BANKS-\rom{2}} increases tens of seconds. We also found that mknum has a larger influence on efficiency than cknum. The reason is that if there are more \textit{central keywords}, we have more connecting points for \textit{marginal keyword} nodes to connect. This makes it easier for searching RPGs. On the other hand, if there are more \textit{marginal keywords}, we have to connect more \textit{marginal keyword} nodes to a CG. This inevitably incurs more costs.

\vspace{1mm}\noindent
\textbf{Exp-2 (vary Topk)}. In Figure \ref{fig:varyTopk}, we show the effects of varying Topk. Note that we set the beam width $w$ of CG candidates to be the same as top-k value of RPGs. As a result, users can use top-k to directly modify the number of CGs to consider. It is clear that when Topk increases, \textit{BANKS-\rom{2}} and \wiki has an increasing processing time. In contrast, \riki has a trend of decreasing processing time. To explain this, we also show the broken-down time for the whole processing. The broken-down time includes \textit{central keywords} search and \textit{marginal keywords} search time. The two parts  of time clearly show that when Topk increases, the search time of CG increases. However, the search time of \textit{marginal keywords} has a decreasing trend. The reason is that when Topk increases, we collect more candidate CGs. This results in more joining points (i.e. candidate CGs) for \textit{marginal keywords}. In other words, it becomes easier for marginal keywords to be connected to a CG. In addition, the major time is spent on marginal keyword search.
Hence, the total processing time has a decreasing trend when Topk increases and the marginal keyword search time decreases.
It is expected that when search time of CGs continue increasing, the total time will eventually increase.

\vspace{1mm}\noindent
\textbf{Exp-3 (vary Tnum)}. In Figure \ref{subfig:varyTnum}, we vary the number of CPU threads from 1 to 50. Note that when Tnum=1, \riki-CPU becomes a sequential processing. For \riki-GPU, increasing of CPU threads only benefits the recovering phase, because other steps are on GPU. This is why the increasing of Tnum only benefits \riki-GPU to a small extent. In comparison, the processing time of \riki-CPU can decrease by one order of magnitude from Tnum=1 to Tnum=50. It can even beat \riki-GPU when Tnum=50 on WikiLarge. Reasons are twofolds. On one hand, CPU traversal is faster with more threads. On the other hand, as the graph size becomes larger compared to WikiSmall, there are more divergent execution within a warp on GPU, because more edges are examined to know whether they can be expanded w.r.t. a keyword. As a result, \riki-GPU slows down with larger graph size and \riki-CPU speeds up with more threads.

\vspace{1mm}\noindent
\textbf{Exp-4 (vary $\alpha$)}. In Figure \ref{subfig:varyAlpha}, we can see that the running time is relatively stable while varying $\alpha$, but there exists a decreasing trend of processing time when $\alpha$ gets larger. We found that with small $\alpha$ a large portion of edges are inaccessible at the early stage of search. This causes the processing to unnecessarily halt just to wait for a larger Glevel. In contrast, with larger $\alpha$, more edges are accessible early, leading to faster results retrieving. 

From all experiments, we can see that \riki-GPU tends to be faster than \riki-CPU. We owe this to the larger bandwidth between main memory and processors (DDR5X) of GPU, which is over 350 GB/s versus 55 GB/s on CPU with DDR4.

\vspace{1mm}\noindent
\textbf{Exp-5 (Running Storage Cost)}. Note that all text information of nodes and edges can be kept in external storages such as MongoDB \cite{Chodorow:mongodb}. Since CPU has relatively large memory, we focus on studying the running storage cost on GPU of \riki. In Figure \ref{fig:storageCost}, we set cknum=4 and vary mknum from 2 to 6, so the storage cost is maximum in terms of \textit{central keywords}. We only record the peak running storage cost on GPU. The storage cost on GPU mainly comes from the adjacency lists of the data graph and the node-keyword matrix. From Figure \ref{fig:storageCost}, we can see that increasing of several keywords just causes a very small increase of memory usage. This is because adding one keyword is equivalent to adding one column to node-keyword matrix. Note that there are around 11 GB memory of our GPU, indicating that we can handle graphs with an even larger size. 

In addition, the \textbf{memory transfer cost} between CPU (host) and GPU (device) mainly comes from copying the node-keyword matrix for both \textit{central keywords} and \textit{marginal keywords}. The copy is done only once from device to host. The time cost is in total just around 10 ms on WikiSmall and 20 ms on WikiLarge. Thus, it is not a main concern for \riki.

\subsection{Effectiveness Studies}
\label{sec:effectiveness}
\begin{table*}
	\centering
	\small
	\caption{Keyword Queries. kwfSmall and kwfLarge denote the overall average keyword frequencies.}
	\begin{tabular}{c||c|c||c|c} 
		Query ID & \textit{central keywords} & \textit{marginal keywords} & kwfSmall & kwfLarge\\ \hline
		1    & vascular& network blood flow  & 6059 & 87102\\ \hline
		2    &  machine learning&  multi-Label learning learning algorithm  &51 &931\\ \hline
		3    &  tensor & multilinear algebra   kernel method & 78 &1267\\\hline
		4    &  text mining &  grounding   computer vision & 56 & 242\\\hline
		5    & cognition affect &  decision making cooperation  &1226 &20939 \\\hline
		6    &  causal inference causal Reasoning causality & selection bias case-control studies  & 43 & 568\\\hline
		7    &  social media &  temporal dynamics topic modeling user group  & 94 &700\\\hline
		8    &  Gaussian Processes regression&  Bayesian summarization model description  & 198 & 3859\\\hline
		9    & structure learning Bayesian learning &  graphical models deep belief networks  & 5 & 58\\\hline
		10    &  disease mapping & spatial scan mobile data  & 7 & 51\\\hline
		11    &  machine learning  & online learning endowment effect & 52 & 877\\\hline
		12    &planning & Bayesian optimization Gaussian Processes &439 & 5757 \\\hline
		13    & temporal dynamics & social influence social networks  & 56 & 807\\\hline
		14    &  feature selection &  dimensionality reduction semi-supervised learning  & 17 & 359\\\hline
		15    &  Sparse coding transfer learning &  supervised learning classification  &559&6199\\
		 &&  Support Vector Machine  &  & \\
	\end{tabular}
	\label{tble:queries}
\end{table*}
\begin{figure*}[ht]
	\centering
	\subfloat[WikiSmall]{\includegraphics[width=\linewidth]{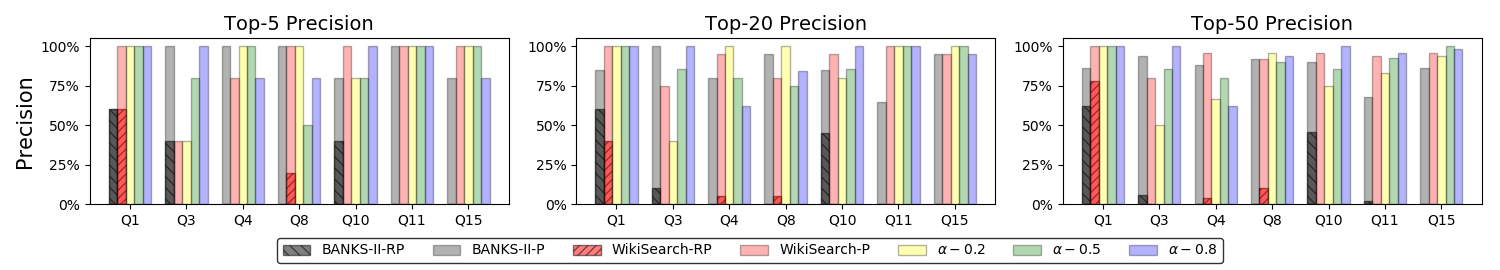}\label{subfig:wiki2017topkPrecisionPart1}}\\
	\subfloat[WikiLarge]{\includegraphics[width=\linewidth]{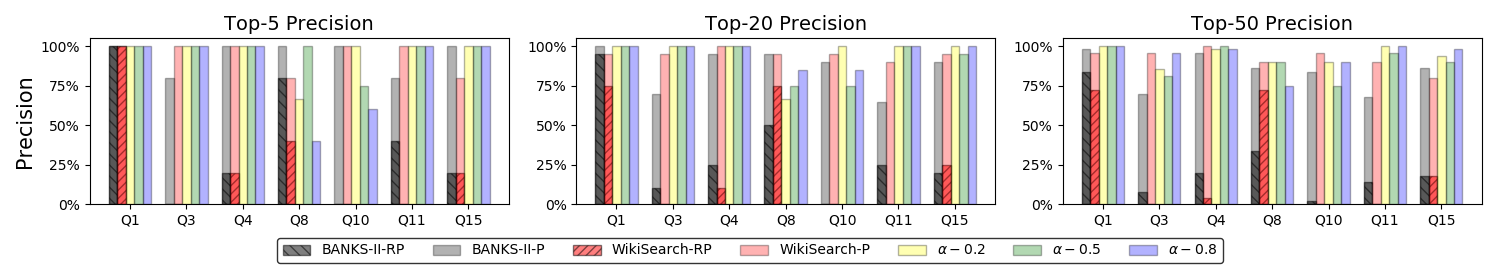}\label{subfig:wiki2018topkPrecisionPart1}}
\caption{Top-k Precision and Ratio of Radial Pattern Results. The results include \textit{BANKS-\rom{2}-RP} (ratio of radial-pattern results by \textit{BANKS-\rom{2}}), \textit{BANKS-\rom{2}-P} (top-k precision), \textit{WikiSearch-RP} (ratio of radial-pattern results), \textit{WikiSearch-P} (top-k precision) and three settings of $\alpha$ (0.2, 0.5, 0.8) of \riki. For \textit{BANKS-\rom{2}} and \wiki, we show their ratio of radial-pattern results inside the bar of the respective top-k precision. In particular, we do not show the ratio of radial pattern results of \riki, which is always equal to the top-k precision.}
	
\label{fig:topkPrecisionPart1}
\end{figure*}
\textbf{Query generation}. In this section, we evaluate two aspects in terms of effectiveness of \riki. First, the RPGs returned by \riki are relevant. Second, \riki can return RPGs with radial pattern shapes while the competitors can only find such results by chance. 
We select 15 representative queries for effectiveness tests and manually separate them into \textit{central} and \textit{marginal keywords}, as shown in Table \ref{tble:queries}. Note that testing effectiveness by selected keyword queries is a commonly used way in the field of keyword search \cite{Aditya:banks,He:blinks,Kacholia:bidirection,Kargar:RCliques,Li:ease,Li:groupsteiner,Yang:FindingPatternsKeyword}.
We use the following two metrics in the evaluation.
\begin{itemize}[leftmargin=3mm,noitemsep]
	\item \textbf{Top-k precision.} It is used in many experiment studies for keyword search \cite{Aditya:banks,He:blinks,Kacholia:bidirection,Li:ease,Kargar:RCliques}. Top-k precision measures the ratio of relevant results within top-k retrieved results. The relevancy of a result is judged by researchers in our group. We follow three rules for judgment. First, the results should refer to related concepts or topics. Second, repetitive results are counted only once. These results differ at only one or two nodes and produce little new information. Third, results with very long paths and massive nodes are regarded as irrelevant results. Nodes within these results are not closely related mainly due to short-cut effects.
	\item \textbf{Ratio of Radial Pattern Results.}
	It is the ratio of the top-$k$ results which satisfy PTC in Definition \ref{def:RPG}. By showing this ratio, we validate our claim that the typical keyword search approaches can mistakenly put user focused keywords far apart in the result. And less focused keywords constitute the central part.
	This motivates us to develop \riki, which can always return radial pattern graphs as results. 
\end{itemize}

For clearance, we show one part of results in Figure \ref{fig:topkPrecisionPart1}. The other part is in Figure \ref{fig:topkPrecisionPart2}, which are relatively better due to higher keyword frequencies.

From an overall perspective, we summarize three observations. First, larger frequency of input keywords leads to better answers, due to more counterparts of keywords in data graphs. This makes it easier to form an result graph. Q1 is a typical example, and more examples are in Figure \ref{fig:topkPrecisionPart2}. Second, all top-5 results are shown to be good, which validates that our search scheme can return effective answers for a small top-$k$ ($w=k$) setting. Last but not the least, \riki can effectively return RPGs in comparison with \textit{BANKS-\rom{2}} and \wiki. The two competitors even failed returning any, such as Q3, Q4, Q8, Q11 and Q15 in Figure \ref{fig:topkPrecisionPart1} for some settings. This validates our claim that typical keyword search methods may miss connections among focused keywords.

From a detailed perspective, we find that $\alpha$ can affect the top-k precision of results returned by \riki. For example, in Figure \protect\ref{subfig:wiki2017topkPrecisionPart1}, the top-k precision of \riki for Q3 drops when $\alpha$ is relatively small. The reason is that the two marginal keywords of Q3 only appear 3 and 5 times in WikiSmall dataset. As a result, the number of effective paths (without passing shortcuts) from these \textit{marginal keywords} nodes is limited. Smaller $\alpha$ postpones traversing through shortcut edges. As a result, most detailed effective paths from \textit{marginal} keywords are covered before shortcuts. On one hand, these paths meet before hitting the \textit{central keywords}, which do not lead to RPGs. On the other hand, due to their limited number, these paths can be very long before reaching the \textit{central keywords}. Consequently, we find that the top-2 results of Q3 by \riki with small $\alpha$ tend to include most of the relevant paths. Thus, the rest of the results tend to be irrelevant due to their long path connections.
In addition to Q3, Q10 and Q11 have a similar problem as shown in Figure \protect\ref{subfig:wiki2017topkPrecisionPart1}. 

Another observation is that the top-k precision of some queries drops when $\alpha$ becomes larger, such as Q4 and Q8 in Figure \protect\ref{subfig:wiki2017topkPrecisionPart1} and Q8 in Figure \protect\ref{subfig:wiki2018topkPrecisionPart1}. The reason is that when using a larger $\alpha$, edges tend to be activated earlier in search. These edges may include some meaningless shortcuts, which lead to less informative results such as a \textit{conference} node connecting two papers. Due to shortcut effects, the result graphs may contain massive nodes, thus labeled as irrelevant.

In sum, we can see that typical methods like \textit{BANKS-\rom{2}} and \wiki can possibly miss the connections among central keywords. In contrast, \riki can always return results without losing focus.
In addition, \riki remains competitive in terms of top-k precision. Although $\alpha$ may affect the effectiveness of results, users can always get an immediate response by re-submitting the query with a different $\alpha$ due to the fast processing of \riki.

\begin{figure*}[ht]
	\centering
	\subfloat[WikiSmall]{\includegraphics[width=\linewidth]{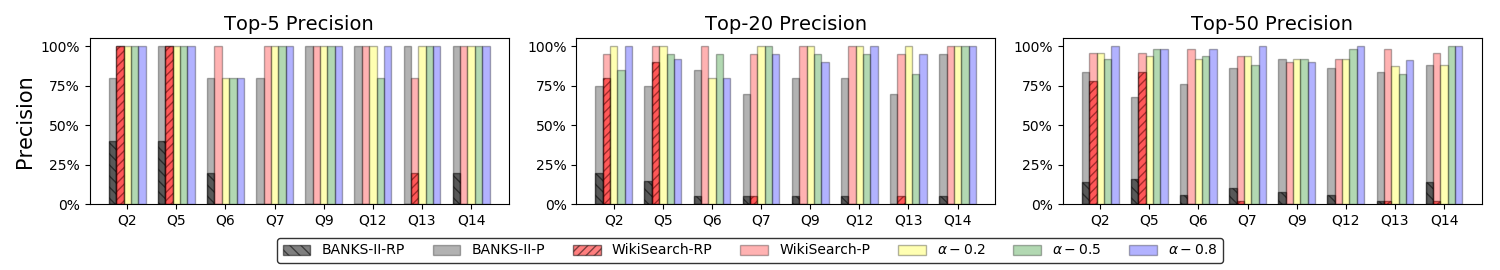}\label{subfig:wiki2017topkPrecisionPart2}}\\
	\subfloat[WikiLarge]{\includegraphics[width=\linewidth]{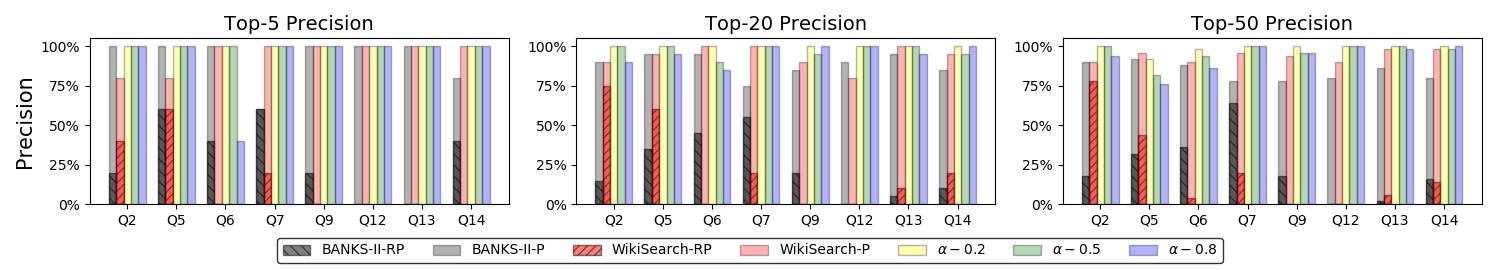}\label{subfig:wiki2018topkPrecisionPart2}}
	\caption{Top-k Precision and Ratio of Radial-pattern Results. Complementary results of Figure \ref{fig:topkPrecisionPart1}.}
	\label{fig:topkPrecisionPart2}
\end{figure*}
\section{Related Work}
\label{sec:relatedWork}
Our work is mainly related to keyword search area. Keyword search is popular for a long time, since early works like DBXplorer \cite{Agrawal:dbxplorer} and DISCOVER \cite{Hristidis:discover}. Following this line of research, there are many other works, like \cite{Qin:TenThousandSQLs,Qin:powerOfRDBMS,Qin:queryingCommunities,Yu:kwsearchDB,Markowetz:KwSearchDataStream,Luo:Spark}. These works focus on relational databases and the corresponding schema graph which is usually small. In particular, \cite{Qin:TenThousandSQLs} introduces a parallel approach, which depends mainly on relational database operations.

There are also many approaches \cite{Aditya:banks,Dalvi:keywordSearchExternal,Ding:min-cost-topk,He:blinks,Kacholia:bidirection,Le:ScalableKeywordSearch,Li:groupsteiner,Yang:wikisearch} that can work on a graph structured data. Among these, BANKS-\rom{1} \cite{Aditya:banks} and BANKS-\rom{2} \cite{Kacholia:bidirection} propose a backward search algorithm. These works all try to find a tree-shaped answer that closely relates and covers all keywords. \cite{Dalvi:keywordSearchExternal,He:blinks,Le:ScalableKeywordSearch} provide disk solutions for huge graphs. And, BLINKS \cite{He:blinks} builds index to accelerate search. \cite{Ding:min-cost-topk,Li:groupsteiner} use dynamic programming to find optimum steiner tree. However, \cite{Ding:min-cost-topk} generates top-k results progressively and is rather slow pointed by \cite{Li:groupsteiner}. Moreover, it is not clear how to obtain top-k by the latter. EASE \cite{Li:ease} and RCliques \cite{Kargar:RCliques} provides graph-shaped results. However, as data size grows rapidly, these index-based approaches suffer from scalibility issues. \wiki \cite{Yang:wikisearch} is a recent work that focuses on accelerating keyword search by modern hardwares, e.g. GPUs and multi-core CPUs. 

However, none of the above works distinguish between \textit{focused} and \textit{less focused} keywords.
In addition, there are works allowing users to input preferences for keyword search \cite{Duan:KeywordSearchProduct,Stefanidis:Perk}. However, \cite{Duan:KeywordSearchProduct} studies the problem on returning a product entity with user preferences. \cite{Stefanidis:Perk} propose to let users input preferences directly when submitting the keyword queries and the results are ranked according to the preferences. Furthermore, \cite{Stefanidis:Perk} only targets at relational databases.
Differently, \riki considers focused and less focused keywords.
There is no preference in \textit{marginal keywords}. Moreover, the relationships between \textit{marginal} and \textit{central keywords} are explicitly discovered as connection paths.

\section{Conclusion}
\label{sec:conclusion}
In this paper, we propose an efficient parallel keyword search engine, called \riki. It takes as input two sets of keywords, namely \textit{central keywords} and \textit{marginal keywords}. The corresponding results can capture close relationships among \textit{central keywords}, while revealing how \textit{marginal keywords} are related to or connected via \textit{central keywords}. In order to improve the efficiency, we devise novel weighting and scoring schemes along with a unified algorithmic framework. 
\bibliographystyle{ACM-Reference-Format}
\bibliography{Raks}

%

\end{document}